\newcommand{\1}{\mathbbm{1}}
\newcommand{\E}{\mathbbm{E}}
\newtheorem{proposition}{Proposition}
\title{Properties of the generalized inverse Gaussian with applications to Monte Carlo simulation and distribution function evaluation}
\author{V\'ictor Pe\~na, \textit{Universitat Polit\`ecnica de Catalunya}\\
        Michael Jauch, \textit{Florida State University}}
\begin{document}

\maketitle
\begin{abstract}
The generalized inverse Gaussian, denoted $\mathrm{GIG}(p, a, b)$, is a flexible family of distributions that includes the gamma, inverse gamma, and inverse Gaussian distributions as special cases. In addition to its applications in statistical modeling and its theoretical interest, the GIG often arises in computational statistics, especially in Markov chain Monte Carlo (MCMC) algorithms for posterior inference. This article introduces two  mixture representations for the GIG: one that expresses the distribution as a continuous mixture of inverse Gaussians and another that reveals a recursive relationship between GIGs with different values of $p$. The former representation forms the basis for a data augmentation scheme that leads to a geometrically ergodic Gibbs sampler for the GIG. This simple Gibbs sampler, which alternates between gamma and inverse Gaussian conditional distributions, can be incorporated within an encompassing MCMC algorithm when simulation from a GIG is required. The latter representation leads to algorithms for exact, rejection-free sampling as well as CDF evaluation for the GIG with half-integer $p.$ 
\end{abstract}

\clearpage

\section{Introduction}

The generalized inverse Gaussian, which we denote  $\mathrm{GIG}(p, a, b)$, is a three-parameter, absolutely continuous distribution supported on $\mathbb{R}_+$. It is a rich family of distributions that encompasses the gamma distribution (if $b = 0$), the inverse gamma distribution (if $a = 0$), and the inverse Gaussian distribution (if $p = -1/2$). The parameters of the $\mathrm{GIG}(p, a, b)$ are $p \in \mathbb{R}$, $a > 0$, and $b > 0$, and its probability density function (PDF) is 
$$
f_X(x) = \frac{(a/b)^{p/2}}{2 K_p(\sqrt{ab})} \, x^{p-1} \exp\left\{  \frac{-(ax + b/x)}{2} \right\} \, \1(x > 0),
$$
where $K_p( \cdot)$ is the modified Bessel function of the second kind \citep{abramowitz1968handbook}. The moments of the GIG involve Bessel functions but are available in closed form (see, for example, \cite{jorgensen1982statistical}).

 \textcolor{black}{The generalized inverse Gaussian was first proposed by Halphen in the 1940s for analyzing hydrological data \citep{halphen1941nouveau, perreault1999halphen}. It regained popularity in the mid to late 1970s, when a series of articles studied its properties in detail (see, for example, \cite{barndorff1977infinite}, \cite{blaesild1978shape}, and \cite{halgreen1979self}). For an account of known properties and characterizations of the $\mathrm{GIG}(p, a, b)$, we refer the reader to the monograph \cite{jorgensen1982statistical} and the review article \cite{koudou2014characterizations}.}



The GIG often arises in computational statistics, especially in Markov chain Monte Carlo (MCMC) algorithms for posterior inference. For example, it emerges as a full conditional distribution for scale parameters in normal models. Sections~\ref{sec:gibbs} and~\ref{sec:half} describe or reference a variety of applications where it is necessary to simulate random draws from the GIG or evaluate its cumulative distribution function (CDF).


This article introduces two mixture representations of the generalized inverse Gaussian. One expresses the $\mathrm{GIG}(p, a, b)$ as a continuous mixture of inverse Gaussians, while the other reveals a relationship between GIGs. The former representation forms the basis for a data augmentation scheme that leads to a geometrically ergodic Gibbs sampler for the GIG. This Gibbs sampler, which alternates between gamma and inverse Gaussian conditional distributions, can be incorporated within an encompassing MCMC algorithm when simulation from a GIG is required. The latter representation leads to the first algorithm in the literature for exact, rejection-free sampling from the GIG with half-integer $p$ as well as an algorithm for CDF evaluation in that same setting. 

\color{black}

Several articles have been written on the topic of simulating from the GIG, including \cite{dagpunar1989easily}, \cite{hormann2014generating}, \cite{devroye2014random}, and \cite{zhang2022generator}. Recently, \cite{willmot2022remarks} derived expressions for the CDF of the generalized inverse Gaussian for half-integer $p$. In simulation experiments, we found that our implementations of the data-augmented Gibbs sampler and the exact simulation method can be more efficient than prominent \texttt{R} packages for simulating from the GIG, but this is not always the case. In our view, the methods proposed in this article complement rather than supersede the existing methods. 


The proofs of all propositions can be found in the supplementary material. In addition, the supplementary material includes pseudocode for the algorithms, a section discussing an alternative Gibbs sampler that can be derived from the results of \cite{zhang2022generator}, and a section describing how to adapt the propositions to an alternative parametrization of the GIG. The supplementary material also contains a discussion of the simulation experiments\footnote{Code  available at \url{https://github.com/michaeljauch/gig}.}.

\color{black}

 \subsection{Notation}

 We use the notation $\text{InvGauss}(\mu, \lambda)$ for the inverse Gaussian distribution with mean parameter $\mu$ and shape parameter $\lambda$. Similarly, $\text{Gamma}(\alpha, \beta)$ is the gamma distribution with shape parameter $\alpha$ and rate parameter $\beta$ and $\text{Exp}(\theta)$ is the exponential distribution with rate parameter $\theta$. 

\section{Data-augmented Gibbs sampler} \label{sec:gibbs}

In this section, we introduce a representation of the GIG$(p, a, b)$ as a continuous mixture of inverse Gaussian distributions. The representation is useful for deriving a data-augmented Gibbs sampler for the GIG. We then reference examples from the literature where the GIG appears as a full-conditional distribution in Gibbs sampling algorithms. An example shows how the Gibbs sampler can be nested within an encompassing MCMC algorithm when simulation from a GIG is required. Finally, we provide theoretical support for the Gibbs sampler: Proposition~\ref{prop:geo} establishes that it is geometrically ergodic.

\begin{proposition} \label{prop:mixt1}  The PDF of the $\mathrm{GIG}(p, a, b)$ for $p \neq -1/2$ can be written as a continuous mixture of inverse Gaussian distributions:
$$
f_X(x) = \int_{0}^\infty f_{X \mid Y}(x \mid y) f_{Y}(y) \, \mathrm{d} y,
$$
where $f_X(x)$ is the PDF of the $\mathrm{GIG}(p, a, b)$, $f_Y(y)$ is a PDF with support on $\mathbb{R}_+$, and $f_{X \mid Y}(x \mid y)$ is an inverse Gaussian PDF. 

The parameters of the distributions are different for the cases $p < -1/2$ and $p > -1/2$. If $p < -1/2$, then
\begin{align*}
f_{X \mid Y}(x) & \textcolor{black}{\, = f_{\mathrm{IG}}(x \, ; \, \sqrt{b/(a+2y)}, b)} \\
f_Y(y) &= \frac{1}{\mathcal{Z}_0} \, y^{-(p+3/2)} \, \exp\left\{ -\sqrt{b (a + 2y)} \right\} \\
\mathcal{Z}_0 &= \frac{b^{(p+1)/2} \, \sqrt{2} \, \Gamma(-p-1/2) \, K_p ( \sqrt{a b})}{a^{p/2}  \, \sqrt{\pi}},
\end{align*}
where $f_{\mathrm{IG}}(x \, ; \, \sqrt{b/(a+2y)}, b)$ is the PDF of the InvGauss$(\sqrt{b/(a+2y)}, b)$. If $p > -1/2$, then
\begin{align*}
f_{X \mid Y}(x)  & \,  \textcolor{black}{= f_\mathrm{IG}(x \, ; \, \sqrt{(b+2y)/a}, b + 2y)} \\
f_Y(y) &= \frac{1}{\mathcal{Z}_{1}} \, \frac{y^{p-1/2}}{\sqrt{b+2y}} \,  \exp \left\{ -\sqrt{a (b+2y)} \right\} \\
\mathcal{Z}_{1} &= \frac{b^{p/2} \, \sqrt{2} \, \Gamma(p+1/2) \, K_p ( \sqrt{a b})}{a^{p/2} \, \sqrt{\pi} }.
\end{align*}
\end{proposition}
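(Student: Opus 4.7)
\textbf{Proof plan for Proposition~\ref{prop:mixt1}.}
Since the candidates for $f_{X\mid Y}$ and $f_Y$ are already specified, the proposition reduces to verifying the identity $\int_0^\infty f_{X\mid Y}(x\mid y)\,f_Y(y)\,\mathrm{d}y = f_X(x)$. If this identity holds, then integrating both sides in $x$ shows that the claimed $f_Y$ integrates to one (since $f_X$ and $f_{X\mid Y}(\cdot\mid y)$ are PDFs), so it is automatically a valid density, and no separate normalization argument is required. My plan is therefore to expand the inverse Gaussian conditional, recognize the cancellations built into the construction, and reduce the remaining integral to a gamma function.

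First I would write the inverse Gaussian density in the form
$$f_{\mathrm{IG}}(x;\mu,\lambda) = \sqrt{\tfrac{\lambda}{2\pi}}\, x^{-3/2}\exp\!\left\{\tfrac{\lambda}{\mu}\right\}\exp\!\left\{-\tfrac{\lambda x}{2\mu^2} - \tfrac{\lambda}{2x}\right\}.$$
In the case $p<-1/2$, with $\mu=\sqrt{b/(a+2y)}$ and $\lambda=b$, this gives $\lambda/\mu^2=a+2y$ and $\lambda/\mu=\sqrt{b(a+2y)}$, so the factor $\exp\{\sqrt{b(a+2y)}\}$ in $f_{X\mid Y}$ cancels exactly with the factor $\exp\{-\sqrt{b(a+2y)}\}$ inside $f_Y$. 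In the case $p>-1/2$, with $\mu=\sqrt{(b+2y)/a}$ and $\lambda=b+2y$, one finds $\lambda/\mu^2=a$ and $\lambda/\mu=\sqrt{a(b+2y)}$, and analogously the square-root-in-the-exponent factors cancel, while the $\sqrt{b+2y}$ from the inverse Gaussian normalizing constant cancels the $1/\sqrt{b+2y}$ in $f_Y$. This is the key structural observation that makes the mixture clean.

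After the cancellations, the integrand factors as a function of $x$ alone times $y^{\alpha}e^{-cy}$ for some $\alpha$ and $c>0$ that depend on $x$. Specifically, for $p<-1/2$ the $y$-integral is $\int_0^\infty y^{-(p+3/2)}e^{-xy}\,\mathrm{d}y = x^{p+1/2}\Gamma(-p-1/2)$, which converges precisely because $p<-1/2$; for $p>-1/2$ it is $\int_0^\infty y^{p-1/2}e^{-y/x}\,\mathrm{d}y = x^{p+1/2}\Gamma(p+1/2)$, which converges precisely because $p>-1/2$. In either case the resulting power of $x$ combines with the $x^{-3/2}$ from the inverse Gaussian to yield $x^{p-1}$, and the remaining exponential factor is $\exp\{-(ax+b/x)/2\}$. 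Thus the integrated density matches the GIG kernel exactly.

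The final step is routine: matching the prefactor to the GIG normalizing constant $(a/b)^{p/2}/[2K_p(\sqrt{ab})]$ produces precisely the expressions stated for $\mathcal{Z}_0$ and $\mathcal{Z}_1$, confirming that $f_Y$ is a bona fide density and completing the verification. I do not foresee any real obstacle beyond careful bookkeeping of constants; the substantive observation is the cancellation of the $\exp\{\pm\sqrt{\cdot}\}$ terms, which is what motivates the particular form of the mixing density in each case and explains why the two cases must be handled separately (the natural mixing measure in one case fails to be integrable in the other, matching the $p=-1/2$ exclusion).
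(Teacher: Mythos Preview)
Your proposal is correct and rests on the same core identity as the paper's proof, namely the gamma-integral representation $\int_0^\infty y^{\alpha-1}e^{-cy}\,\mathrm{d}y=\Gamma(\alpha)c^{-\alpha}$ that produces the factor $x^{p+1/2}$, together with the cancellation of the $\exp\{\pm\sqrt{\cdot}\}$ terms. The only organizational difference is direction: you take the stated $f_{X\mid Y}$ and $f_Y$ as given and verify the mixture identity directly, whereas the paper proceeds constructively by writing $f_X(x)\propto x^{p+1/2}f_{\mathrm{IG}}(x;\sqrt{b/a},b)$, inserting the gamma integral for $x^{p+1/2}$ to create a joint density, and then recovering $f_Y$ via $f_Y(y)^{-1}=\int_0^\infty f_{X\mid Y}(x\mid y)/f_{Y\mid X}(y\mid x)\,\mathrm{d}x$. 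Your verification route is shorter and arguably more natural given that the statement already supplies the answer; the paper's route has the advantage of showing where the mixing density comes from and, as a by-product, yields the conditional $Y\mid X$ needed for Proposition~\ref{prop:cond}.
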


To see why we have two separate cases in Proposition 1, notice that the PDF of the GIG can be written as 
$$
f_X(x) \propto x^{p+1/2} \,  \textcolor{black}{f_{\text{IG}}(x \, ; \, \sqrt{b/a}, b)}.
$$
We can substitute two different integral representations for $x^{p+1/2}:$ one that is valid for $p > -1/2$ and another one that is valid for $p < -1/2.$ If $p > -1/2$, we use the representation
$$
x^{p+1/2} \underset{(p > -1/2)}{=} \frac{\int_0^\infty y^{p-1/2} \, \exp( -y/x) \, \mathrm{d} y }{\Gamma(p+1/2)}.
$$
If $p < -1/2$, we use the representation
$$
x^{p+1/2} \underset{(p < -1/2)}{=} \frac{\int_0^\infty y^{-(p+3/2)} \, \exp \left( -x y \right) \, \mathrm{d} y}{\Gamma(-(p+1/2))}.
$$
The case $p = -1/2$ is not included in Proposition 1, but the GIG$(-1/2, a, b)$ is equivalent to $\mathrm{InvGauss}( \sqrt{b/a},  b)$ \citep{jorgensen1982statistical}. The conditional distribution $X \mid Y = y$ can be deduced from the implied joint PDF $f_{X,Y}(x,y)$. From the same bivariate PDF, we can also deduce that $Y \mid X = x$ is gamma-distributed.

\begin{proposition} \label{prop:cond}
    Let $f_{X \mid Y}(x)$ and $f_Y(y)$ be as defined in Proposition 1. If $p < -1/2$, then $Y \mid X = x \sim \mathrm{Gamma}(-(p+1/2), x)$. If $p > -1/2$, then $Y \mid X = x \sim \mathrm{Gamma}(p+1/2, 1/x)$.
\end{proposition}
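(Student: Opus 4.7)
The plan is to form the joint density $f_{X,Y}(x,y) = f_{X \mid Y}(x \mid y)\, f_Y(y)$ from Proposition~\ref{prop:mixt1} and then read off $f_{Y \mid X}(y \mid x)$ by isolating the $y$-dependent factors with $x$ held fixed. Since dividing by the marginal $f_X(x)$ only rescales in $x$, any factor depending solely on $x$ can be absorbed into the unknown normalizing constant, and identifying a gamma kernel in $y$ is enough to conclude.

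The key algebraic step is to expand the exponent of the inverse Gaussian density. Writing
$$f_{\mathrm{IG}}(x;\mu,\lambda) = \sqrt{\lambda/(2\pi x^3)} \, \exp\bigl(-\lambda(x-\mu)^2/(2\mu^2 x)\bigr),$$
one obtains $-\lambda(x-\mu)^2/(2\mu^2 x) = -\lambda x/(2\mu^2) + \lambda/\mu - \lambda/(2x)$. In the case $p < -1/2$, substituting $\mu = \sqrt{b/(a+2y)}$ and $\lambda = b$ yields an exponent $-(a+2y)x/2 + \sqrt{b(a+2y)} - b/(2x)$, while $\sqrt{\lambda/(2\pi x^3)}$ is free of $y$. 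Multiplying by $f_Y(y) \propto y^{-(p+3/2)}\exp(-\sqrt{b(a+2y)})$ cancels the radical in the exponent and leaves, as a function of $y$, the kernel $y^{-(p+3/2)} \exp(-xy)$, which is precisely that of $\mathrm{Gamma}(-(p+1/2), x)$. For $p > -1/2$, substituting $\mu = \sqrt{(b+2y)/a}$ and $\lambda = b+2y$ produces an exponent $-ax/2 + \sqrt{a(b+2y)} - (b+2y)/(2x)$ along with a normalizing factor $\sqrt{b+2y}$. Multiplying by $f_Y(y) \propto y^{p-1/2}(b+2y)^{-1/2}\exp(-\sqrt{a(b+2y)})$ cancels both the radical in the exponent and the $\sqrt{b+2y}$ factors, leaving $y^{p-1/2}\exp(-y/x)$, the kernel of $\mathrm{Gamma}(p+1/2, 1/x)$.

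The main obstacle is bookkeeping: one must verify that all of the $y$-dependent pieces introduced by the inverse Gaussian (including the $\sqrt{\lambda}$ prefactor when $\lambda = b+2y$) combine with $f_Y(y)$ to give a clean gamma kernel. No further analytic work is needed, because identifying the kernel determines the conditional distribution automatically, without having to compute either $f_X(x)$ or the normalizing constant of the gamma density explicitly.
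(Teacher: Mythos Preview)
Your proof is correct and follows essentially the same approach as the paper: form the joint density $f_{X,Y}(x,y)$ and read off the gamma kernel in $y$ with $x$ fixed. The only cosmetic difference is that the paper's proof quotes the joint in the simpler form obtained midway through the proof of Proposition~\ref{prop:mixt1} (before $f_Y$ was split off), so the cancellations of the $\sqrt{b(a+2y)}$, $\sqrt{a(b+2y)}$, and $\sqrt{b+2y}$ factors that you carefully track never have to be undone there; your version starts from the final statement of Proposition~\ref{prop:mixt1} and therefore has to reassemble those pieces, but the underlying argument is identical.
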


One can define a data-augmented Gibbs sampler for the GIG by sampling iteratively from the conditional distributions $X \mid Y = y$ and $Y \mid X = x$. Only the draws from $X \mid Y = y$ are relevant if we are interested in sampling from the GIG$(p,a,b)$. 

\textcolor{black}{In Propositions 1 and 2, we treat the cases $p< -1/2$ and $p> -1/2$ separately to obtain general results for all $p \neq -1/2$. For building a Gibbs sampler, one can focus on one of the cases and leverage the fact that if $X \sim \mathrm{GIG}(-p,b,a)$, then $1/X \sim \mathrm{GIG}(p,a,b)$ \citep{jorgensen1982statistical}.}

The Gibbs sampler requires sampling from inverse Gaussian and gamma distributions. There are efficient algorithms to generate random draws from these distributions. For example, one can generate inverse Gaussian draws with the algorithm described in \cite{michael1976generating}, whereas for the gamma distribution, one can use standard algorithms such as those proposed in \cite{ahrens1974computer}.

\color{black} 

The Gibbs sampler is geometrically ergodic. Geometric ergodicity is a property that quantifies the rate of convergence of the Gibbs sampler to its stationary distribution (see, for example, \cite{johnson2015geometric} for further details). It also ensures that inferences made with draws from the chain are well-behaved. In particular, if $\{X_i\}_{i=1}^n$ are draws from the chain and we want to estimate a function $f(X)$ with finite $\mathbb{E}[ f(X)^{2+\varepsilon}]$ for some $\varepsilon > 0$, geometric ergodicity ensures that $\sum_{i = 1}^n f(X_i)/n$ is approximately normal \citep{chan1994discussion}. Our proof of geometric ergodicity relies on Theorem 3.5 in \cite{johnson2015geometric}.

\begin{proposition} \label{prop:geo} The data-augmented composition Gibbs sampler implied by Propositions~\ref{prop:mixt1} and~\ref{prop:cond} is geometrically ergodic.
\end{proposition}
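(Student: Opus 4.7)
The plan is to invoke Theorem~3.5 of \cite{johnson2015geometric}, which reduces geometric ergodicity of a two-block Gibbs sampler to a geometric drift condition together with a minorization on sublevel sets of the drift function. Because of the reciprocal identity stated in the paper, namely $X \sim \mathrm{GIG}(p,a,b)$ iff $1/X \sim \mathrm{GIG}(-p,b,a)$, and because geometric ergodicity is preserved under the measurable bijection $x \mapsto 1/x$, I would treat only one case in detail, say $p > -1/2$, and deduce the other by conjugation.

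For the drift function I would take $V(x) = x + 1/x$, motivated by the exponential factor $\exp\{-(ax+b/x)/2\}$ in the GIG density. Using Propositions~\ref{prop:mixt1} and~\ref{prop:cond}, together with the standard moment formulas $\E[X \mid Y = y] = \sqrt{(b+2y)/a}$ and $\E[1/X \mid Y = y] = \sqrt{a/(b+2y)} + 1/(b+2y)$ for the inverse Gaussian and $\E[Y \mid X = x] = (p+1/2)\,x$ for the gamma, I would compute
$$
\E[V(X') \mid X = x] \;=\; \E\!\left[\sqrt{(b+2Y)/a} \;+\; \sqrt{a/(b+2Y)} \;+\; 1/(b+2Y) \;\Big|\; X = x\right].
$$
The last two summands are uniformly bounded in $x$, since $b + 2Y \geq b > 0$. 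For the first, Jensen's inequality gives $\E[\sqrt{(b+2Y)/a} \mid X = x] \leq \sqrt{(b + (2p+1)x)/a}$, which is $O(\sqrt{x})$. Combining with the AM--GM bound $\sqrt{x} \leq (\epsilon/2)\,x + 1/(2\epsilon)$, valid for any $\epsilon > 0$, I can arrange $\E[V(X') \mid X = x] \leq \lambda\, V(x) + L$ for any prescribed $\lambda \in (0,1)$ and a correspondingly finite $L$, which is the geometric drift required by the theorem.

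For the minorization on sublevel sets $\{x : V(x) \leq d\}$, which are compact subintervals of $\mathbb{R}_+$ bounded away from $0$ and $\infty$, I would invoke continuity and positivity of the marginal transition density
$$
p(x' \mid x) \;=\; \int_0^\infty f_{X \mid Y}(x' \mid y)\, f_{Y \mid X}(y \mid x)\, \mathrm{d}y,
$$
which follows from positivity and joint continuity of the inverse Gaussian and gamma conditional densities on $\mathbb{R}_+ \times \mathbb{R}_+$. A strictly positive lower bound on $p(\cdot \mid \cdot)$ over a product of compact sublevel sets, obtained by elementary compactness-continuity arguments, then furnishes the required minorization measure.

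The main obstacle I anticipate is not any single step but the careful bookkeeping of constants when verifying the drift: one must simultaneously control the $x \to \infty$ regime, where $\sqrt{x} \ll x$, and the $x \to 0$ regime, where the bounded contribution $\sqrt{a/b} + 1/b$ must be dominated by $\lambda/x$, while keeping track of the dependence on $p$ and $(a,b)$. The reciprocal reduction is what spares us from replicating this analysis in the $p < -1/2$ case, where the roles of $x$ and $1/x$ (and of $a$ and $b$) are swapped in the conditional distributions of Propositions~\ref{prop:mixt1} and~\ref{prop:cond}.
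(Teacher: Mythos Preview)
Your route differs from the paper's. The paper applies Theorem~3.5 of \cite{johnson2015geometric} in its bivariate-drift form: it exhibits quadratic functions $f(x)=1+\alpha(x-\theta)^2$ (respectively $1+\alpha(1/x-\theta)^2$) and $g(y)=1+\beta(y-\mu)^2$ together with constants $j,k,m,n>0$ satisfying $jm<1$, $\E[f(X)\mid Y=y]\le jg(y)+k$, $\E[g(Y)\mid X=x]\le mf(x)+n$, and compactness of $\{y:g(y)\le d\}$; no separate minorization is checked. You instead establish a one-step drift for the marginal $X$-chain with $V(x)=x+1/x$ and supply minorization by hand. That is a valid strategy (closer to the classical Meyn--Tweedie/Rosenthal scheme than to the two-function form the paper actually invokes), and your drift calculation for $p>-1/2$ is correct: the key observation that $\E[V(X')\mid X=x]=O(\sqrt{x})$ as $x\to\infty$ and $O(1)$ as $x\to 0$ goes through exactly as you describe.

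There is, however, a genuine gap in the conjugation step. The identity $X\sim\mathrm{GIG}(p,a,b)\Leftrightarrow 1/X\sim\mathrm{GIG}(-p,b,a)$ relates the \emph{stationary} distributions, but the two Gibbs samplers defined by Propositions~\ref{prop:mixt1}--\ref{prop:cond} are \emph{not} images of one another under $x\mapsto 1/x$. For $p<-1/2$ the latent conditional is $Y\mid X=x\sim\mathrm{Gamma}(-p-\tfrac12,\,x)$ and $X\mid Y=y\sim\mathrm{InvGauss}(\sqrt{b/(a+2y)},\,b)$, whereas applying the $p>-1/2$ construction to $\mathrm{GIG}(-p,b,a)$ gives shape $-p+\tfrac12$ for the gamma and an inverse Gaussian with shape $a+2\tilde y$ rather than a fixed shape. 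Transporting one marginal chain by $x\mapsto 1/x$ therefore does not yield the other, and geometric ergodicity of the former says nothing about the latter. The paper accordingly treats the two regimes separately with different choices of $f,g,\theta,\mu,\alpha,\beta$. Your own method extends just as easily---for $p<-1/2$ the unbounded piece is $\E[1/X'\mid Y=y]=\sqrt{(a+2y)/b}+1/b$, controlled via $\E[Y\mid X=x]=(-p-\tfrac12)/x$, giving drift $O(1/\sqrt{x})$ as $x\to 0$---but you must carry that case out rather than invoke conjugation.
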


\color{black}

The GIG$(p,a,b)$ often arises as a full conditional distribution in Gibbs sampling algorithms. There are countless examples from the literature on global-local shrinkage priors \citep{armagan2011}, graphical models \citep{khare2018}, stochastic volatility models \citep{barndorffnielsen1997}, Bayesian nonparametrics \citep{favaro2012}, smoothing splines \citep{loyal2024}, and other areas. The Gibbs sampler can be embedded within an encompassing MCMC algorithm when simulation from a GIG$(p,a,b)$ is required. The following example offers a stylized illustration. 

\textbf{Example:}
\textit{Suppose that we observe independent and identically distributed data $y_i \mid \mu, \sigma^2 \sim \mathrm{Normal}\left(\mu, \sigma^2 \right)$, $i \in \{1, 2, \, ... \, , n\}$, with $\mu$ and  $\sigma^2$ unknown. We assign independent priors to $\mu$ and  $\sigma^2$ with $\mu \sim \mathrm{Normal}\left(\theta_0, \tau_0^2\right)$ and $\sigma^2 \sim \mathrm{GIG}\left(p_0, a_0, b_0\right).$ Letting $\mathbf{y} = \left(y_1, \ldots, y_n\right)^\top$ and $\bar{y} = \frac{1}{n}\sum_{i=1}^n y_i,$ we can simulate from the posterior distribution of $\left(\mu, \sigma^2\right)$ by iterating between the full conditional distributions
\begin{align*}
\mu \mid \sigma^2, \mathbf{y} &\sim \mathrm{Normal}\left( \theta_n, \tau_n^2 \right) \\ 
\sigma^2 \mid \mu, \mathbf{y} &\sim \mathrm{GIG}\left(p_n, a_n, b_n \right),
\end{align*}
where
\begin{align*} 
\theta_n = \left(\frac{n}{\sigma^2} + \frac{1}{\tau_0^2}\right)^{-1} \left(\frac{n}{\sigma^2}\bar{y} + \frac{1}{\tau_0^2}\theta_0 \right),  \quad 
\tau_n^2 = \left(\frac{n}{\sigma^2} + \frac{1}{\tau_0^2}\right)^{-1},
\end{align*} and
\begin{align*} 
p_n = p_0 - \frac{n}{2},  \quad a_n = a_0, \quad b_n = b_0 + \sum_{i=1}^n (y_i - \mu)^2.
\end{align*}
\textit{We can leverage the results of this section to derive a Gibbs sampler that does not require simulating directly from the GIG$(p,a,b)$. If $p_n < -1/2,$ we can introduce an auxiliary variable $\omega$ and iterate through the full conditional distributions}
\begin{align*}
\mu \mid \sigma^2, \mathbf{y} &\sim \mathrm{Normal}\left( \theta_n, \tau_n^2 \right) \\ 
\sigma^2 \mid \mu, \omega, \mathbf{y} &\sim \mathrm{InvGauss}\left(\sqrt{b_n/(a_n + 2\omega)}, b_n \right) \\ 
\omega \mid \mu, \sigma^2 , \mathbf{y} &\sim \mathrm{Gamma}\left(-(p_n + 1/2), \sigma^2 \right).
\end{align*} \textit{If $p_n > -1/2,$ the full conditional distributions are}
\begin{align*}
\mu \mid \sigma^2, \mathbf{y} &\sim \mathrm{Normal}\left( \theta_n, \tau_n^2 \right) \\ 
\sigma^2 \mid \mu, \omega, \mathbf{y} &\sim \mathrm{InvGauss}\left(\sqrt{(b_n + 2\omega)/a_n}, b_n + 2\omega \right) \\ 
\omega \mid \mu, \sigma^2 , \mathbf{y} &\sim \mathrm{Gamma}\left(p_n + 1/2, 1/\sigma^2 \right).
\end{align*}}

\textcolor{black}{
We conducted a simulation experiment to compare the efficiency of the two Gibbs samplers described in the example. We found that the data-augmented Gibbs sampler was more than 9 times faster than the original Gibbs sampler when the latter was implemented with the method of \citet{devroye2014random} from the R package \texttt{boodist} \citep{boodist}. However, the data-augmented Gibbs sampler was approximately 8\% slower when the original Gibbs sampler was implemented with the method of \citet{hormann2014generating} from the R package \texttt{GIGrvg} \citep{GIGrvg}. Neither the original Gibbs sampler nor the data-augmented Gibbs sampler produce independent draws from the posterior distribution, so it is important to evaluate the effective sample size (ESS) in addition to the time it takes to produce the Markov chains. For the original Gibbs sampler, we found that the ESS was close to the nominal sample size for both parameters $\mu$ and $\sigma^2.$ For the data-augmented Gibbs sampler, this was only true for $\mu.$ The ESS of $\sigma^2$ was about 1/3 of the nominal sample size. The details of the simulation experiment can be found in the supplementary material.}

\section{Exact sampling and CDF evaluation for half-integer $p$} \label{sec:half}


The main results of this section reveal a relationship between GIGs with different values of $p.$ 
Proposition \ref{prop:mixt2} leads to the first algorithm in the literature for exact, rejection-free simulation from the GIG$(p,a,b)$ for half-integer $p.$ Proposition \ref{prop:CDF} leads to an algorithm for evaluating the CDF of the GIG$(p,a,b)$ for half-integer $p.$
{\textcolor{black}{Pseudocode for the algorithms can be found in the supplementary material.}}

{\begin{proposition} \label{prop:mixt2} If $X \sim \mathrm{GIG}(p, a, b)$ and $p >1$, then $X =_d Y + E $ for independent random variables $Y$ and $E$ with
\begin{align*}
f_{Y}(y) &=  \textcolor{black}{w \, f_{\mathrm{GIG}}(y \, ; \, p-2, a, b) + (1 - w)  \, f_{\mathrm{GIG}}(y \, ; \, p-1, a, b)} \\ 
w &= {K_{p-2}(\sqrt{ab})}/{K_p(\sqrt{ab})} \in (0, 1)
\end{align*}
and $E \sim \mathrm{Exp}(a/2)$, \textcolor{black}{where $f_{\mathrm{GIG}}(y \, ; \, p, a, b)$ is the PDF of the GIG$(p,a,b)$.}
\end{proposition}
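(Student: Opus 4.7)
The natural approach is via Laplace transforms, exploiting the independence of $Y$ and $E$: it suffices to verify $\E[e^{-\lambda X}] = \E[e^{-\lambda Y}] \cdot \E[e^{-\lambda E}]$ for all $\lambda \geq 0$, after which uniqueness of Laplace transforms for distributions on $\mathbb{R}_+$ yields $X =_d Y + E$. Direct integration against the GIG normalization gives
$$
\E[e^{-\lambda X}] = \left(\frac{a}{a+2\lambda}\right)^{p/2} \frac{K_p\bigl(\sqrt{(a+2\lambda)b}\bigr)}{K_p(\sqrt{ab})},
$$
while $\E[e^{-\lambda E}] = a/(a+2\lambda)$.

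The whole calculation hinges on the three-term recurrence for modified Bessel functions of the second kind,
$$
K_p(z) = K_{p-2}(z) + \frac{2(p-1)}{z}\,K_{p-1}(z),
$$
which I would apply at two different arguments. First, at $z = \sqrt{ab}$ it rearranges to $1 - w = \frac{2(p-1)}{\sqrt{ab}} \cdot K_{p-1}(\sqrt{ab})/K_p(\sqrt{ab})$, which is strictly positive because $p > 1$ and the modified Bessel functions are positive on $\mathbb{R}_+$. This confirms that $Y$ is a well-defined mixture and provides a closed form for $1 - w$ to use below.

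For the main identity, expand $\E[e^{-\lambda Y}] \cdot a/(a+2\lambda)$ using the mixture structure of $Y$ together with the GIG Laplace transform at indices $p-2$ and $p-1$. Substituting the explicit expressions for $w$ and $1 - w$ and using $\sqrt{a/(a+2\lambda)}/\sqrt{ab} = 1/\sqrt{(a+2\lambda)b}$, the two mixture contributions share a common factor of $(a/(a+2\lambda))^{p/2}/K_p(\sqrt{ab})$, leaving inside brackets
$$
K_{p-2}\bigl(\sqrt{(a+2\lambda)b}\bigr) + \frac{2(p-1)}{\sqrt{(a+2\lambda)b}}\,K_{p-1}\bigl(\sqrt{(a+2\lambda)b}\bigr).
$$
A second application of the Bessel recurrence, now at $z = \sqrt{(a+2\lambda)b}$, collapses this bracket to $K_p(\sqrt{(a+2\lambda)b})$, so the product reproduces $\E[e^{-\lambda X}]$ exactly.

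The main obstacle, or at least the key observation, is the dual role of the Bessel recurrence: at $\sqrt{ab}$ it fixes the mixture weight $1 - w$, and at $\sqrt{(a+2\lambda)b}$ it performs the collapse that matches the two Laplace transforms. The remainder is bookkeeping of powers of $a/(a+2\lambda)$; the single such factor contributed by $E \sim \mathrm{Exp}(a/2)$ shifts the overall exponent by one, which is why the mixture uses indices $p-2$ and $p-1$ rather than the naively expected $p-1$ and $p$.
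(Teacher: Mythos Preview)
Your Laplace-transform argument is correct. The GIG Laplace transform you quote is the standard one, the Bessel recurrence $K_p(z) = K_{p-2}(z) + \tfrac{2(p-1)}{z}K_{p-1}(z)$ is the usual three-term relation (with $\nu=p-1$ in $K_{\nu+1}-K_{\nu-1}=\tfrac{2\nu}{z}K_\nu$), and your bookkeeping of the powers of $a/(a+2\lambda)$ checks out line by line; uniqueness of the Laplace transform on $\mathbb{R}_+$ then finishes the job. The verification that $w\in(0,1)$ via the same recurrence at $z=\sqrt{ab}$ is clean and in fact slightly more explicit than what the paper does.

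The paper takes a different route. It invokes a general mixture-representation theorem (Theorem~2.4 of Jauch and Pe\~na, 2021): because the ratio $g(x)/f(x)$ of the GIG density to the $\mathrm{Exp}(a/2)$ density is monotone non-decreasing for $p>1$, the GIG can be written as a mixture of left-truncated exponentials, and the mixing density is given by the closed-form formula $f_Y(y)=\bigl(g(y)/f(y)\bigr)'\int_y^\infty f(s)\,ds$. Differentiating and simplifying then \emph{produces} the two-component GIG mixture, after which a convolution step recovers $X=_d Y+E$. So the paper's argument is constructive---it discovers $f_Y$---whereas yours is a verification that the stated $f_Y$ works. Your approach is more self-contained (no external theorem, only the Bessel recurrence and Laplace-transform uniqueness) and arguably more transparent about \emph{why} the indices drop by exactly $1$ and $2$; the paper's approach, on the other hand, explains where the representation comes from and would generalize to other target/base density pairs satisfying the monotone-ratio condition.
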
}  
\color{black}

The proposition above can be proven with Theorem 2.4 of \cite{jauch2021mixture}. It shows that if $p > 1$ is half-integer, one can simulate from the $\mathrm{GIG}(p, a, b)$ by recursively simulating GIGs of smaller order $p$ until reaching the base case $p = 3/2$. In that case, the components of the mixture are $\mathrm{GIG}(-1/2, a, b)$, which is equivalent to $\mathrm{InvGauss}(\sqrt{b/a}, b)$. The case $\mathrm{GIG}(1/2, a, b)$ can be handled with an inverse Gaussian as well because if $X \sim \mathrm{GIG}(-p, b, a)$, then $1/X \sim \mathrm{GIG}(p, a, b)$. This property can also be used to include the cases where $p$ is a negative half integer.

The exact simulation method described above should be very efficient when $|p|$ is sufficiently small. To verify this, we conducted an experiment comparing the exact simulation method with the method \citet{devroye2014random} from the R package \texttt{boodist} \citep{boodist} and the method of  \citet{hormann2014generating} from the R package \texttt{GIGrvg} \citep{GIGrvg} in the case where $p=3/2.$ As expected, the exact simulation method outperformed the other two methods \textcolor{black}{in terms of run time.} The details of the simulation experiment can be found in the supplementary material.


\begin{proposition} \label{prop:CDF} Let $G_p(x)$ be the CDF of the $\mathrm{GIG}(p,a,b)$. For half-integer $p$ such that $|p| > 1/2$, the following recurrence holds:
$$
G_p(x) = w G_{p-2}(x) + (1-w) G_{p-1}(x) - \exp\left\{ \frac{-ax}{2} \right\} I_p(x),
$$
where
$$
I_p(x) = \frac{w (ab)^{(p-2)/2}}{2^{p-1} K_{p-2}(\sqrt{ab})} \Gamma\left(2-p, \frac{b}{2x}\right) + \frac{(1-w) (ab)^{(p-1)/2}}{2^p K_{p-1}(\sqrt{ab})} \Gamma\left(1-p, \frac{b}{2x}\right)
$$
and $\Gamma( \cdot,  \cdot)$ is the incomplete gamma function \citep{abramowitz1968handbook}.
\end{proposition}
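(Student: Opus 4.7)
The plan is to leverage Proposition~\ref{prop:mixt2} directly: since $X \stackrel{d}{=} Y + E$ with $Y$ and $E$ independent, $Y$ distributed as the two-component GIG mixture with weights $w, 1-w$, and $E \sim \mathrm{Exp}(a/2)$, I would compute the CDF of $X$ by conditioning on $Y$.

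For $x \geq 0$, using independence and $P(E \leq t) = 1 - e^{-at/2}$ for $t \geq 0$, I would write
\begin{align*}
G_p(x) = \int_0^x f_Y(y) \, P(E \leq x-y) \, \mathrm{d} y = F_Y(x) - e^{-ax/2} \int_0^x f_Y(y) \, e^{ay/2} \, \mathrm{d} y.
\end{align*}
The first term, by the mixture form of $f_Y$ from Proposition~\ref{prop:mixt2}, is immediately $F_Y(x) = w G_{p-2}(x) + (1-w) G_{p-1}(x)$, which accounts for the first two summands on the right-hand side of the claimed identity. It then remains to identify the integral with $I_p(x)$.

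The key calculation is to evaluate, for a general index $q$,
\begin{align*}
J_q(x) := \int_0^x f_{\mathrm{GIG}}(y \, ; \, q, a, b) \, e^{ay/2} \, \mathrm{d} y.
\end{align*}
Multiplying the GIG density by $e^{ay/2}$ cancels the $e^{-ay/2}$ factor in its kernel, leaving
\begin{align*}
J_q(x) = \frac{(a/b)^{q/2}}{2 K_q(\sqrt{ab})} \int_0^x y^{q-1} \, e^{-b/(2y)} \, \mathrm{d} y.
\end{align*}
Substituting $u = b/(2y)$ transforms the integral into $(b/2)^q \int_{b/(2x)}^\infty u^{-q-1} e^{-u} \, \mathrm{d} u = (b/2)^q \, \Gamma(-q, b/(2x))$, so that
\begin{align*}
J_q(x) = \frac{(ab)^{q/2}}{2^{q+1} K_q(\sqrt{ab})} \, \Gamma(-q, b/(2x)).
\end{align*}
Applying this with $q = p-2$ and $q = p-1$, and combining the two contributions with their respective weights $w$ and $1-w$, reproduces the formula for $I_p(x)$ stated in the proposition.

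The main obstacle is the intermediate computation of $J_q(x)$, and specifically spotting that the substitution $u = b/(2y)$ converts the integrand into the standard upper incomplete gamma form. No regularity issues arise: $Y$ and $E$ have densities on $(0,\infty)$, so the convolution argument is unambiguous, and the substitution is valid for all $x > 0$ and all real $q$. Once $J_q$ is identified, the result follows by straightforward bookkeeping of the prefactors, and the half-integer restriction on $p$ enters only through Proposition~\ref{prop:mixt2}, which anchors the recursion.
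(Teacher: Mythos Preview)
Your approach is essentially the same as the paper's: both use the decomposition $X \stackrel{d}{=} Y + E$ from Proposition~\ref{prop:mixt2} and split $G_p(x)$ into $F_Y(x)$ minus $e^{-ax/2}\int_0^x e^{ay/2} f_Y(y)\,\mathrm{d}y$, then identify the two pieces. Your write-up is in fact more explicit than the paper's on the evaluation of $I_p(x)$, supplying the substitution $u=b/(2y)$ that the paper leaves implicit; one small inaccuracy is your closing remark that the half-integer restriction ``enters only through Proposition~\ref{prop:mixt2}''---that proposition requires only $p>1$, and the half-integer hypothesis matters for the recursion to bottom out, not for the identity itself.
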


The recurrence formula for $G_p$ implies a recursive algorithm for evaluating the CDF of the GIG$(p, a, b)$ for half-integer $p$. The base case involves the CDFs $G_{-1/2}$ and $G_{1/2}$, which are based on inverse Gaussian distributions. 


\color{black}
 
It is not uncommon that statistical methods require sampling from or evaluating the CDF of the $\text{GIG}(p, a, b)$ for half-integer $p.$ \citet{favaro2012} introduce a stick-breaking representation for the normalized inverse Gaussian process \citep{lijoi2005} whose practical value depends upon being able to efficiently simulate from the $\text{GIG}(p, a, b)$ for half-integer $p.$ In both \citet{bhattacharya2015} and \citet{loyal2024}, GIG$(p,a,b)$ full conditional distributions have half-integer $p$ parameters for the priors considered. \citet{he2022} present a data augmentation scheme for models that include gamma functions that requires simulating from a sum of independent GIGs for which $p=-3/2.$ \citet{willmot2022remarks} describe several financial and actuarial applications in which one needs to evaluate the CDF of the $\text{GIG}(p, a, b)$ for half-integer $p.$

\color{black}




\section{Conclusions} \label{sec:conc}

This article introduced two mixture representations of the GIG: one as a continuous mixture of inverse Gaussian distributions, and the other as a mixture of two GIGs plus an exponential random variable. These mathematical results led to conceptually straightforward algorithms for Monte Carlo simulation and CDF evaluation. As future work, it would be interesting to see if there are analogous representations for the matrix-variate version of the generalized inverse Gaussian \citep{barndorff1982exponential, hamura2023gibbs}, a probability distribution over symmetric positive-definite matrices that includes the Wishart and inverse Wishart as special cases.

\color{black}



\clearpage
\bibliography{GIGs}
\clearpage

\appendix

\section{Appendix}

\textcolor{black}{Section 1 includes the proofs for the propositions stated in the main text. Section 2 contains pseudocode for the algorithms. Section 3 considers an alternative parametrization of the GIG. Section 4 includes a derivation of a Gibbs sampler that can be easily deduced from results in \cite{zhang2022generator}. Finally, Section 5 includes a discussion on results from numerical experiments.}

\subsection{Proofs of Propositions}

\newtheorem{prop}{Proposition}

\begin{prop} The PDF of the $\mathrm{GIG}(p, a, b)$ for $p \neq -1/2$ can be written as a continuous mixture of inverse Gaussian distributions:
$$
f_X(x) = \int_{0}^\infty f_{X \mid Y}(x \mid y) f_{Y}(y) \, \mathrm{d} y,
$$
where $f_X(x)$ is the PDF of the $\mathrm{GIG}(p, a, b)$, $f_Y(y)$ is a PDF with support on $\mathbb{R}_+$, and $f_{X \mid Y}(x \mid y)$ is an inverse Gaussian PDF. 

The parameters of the distributions are different for the cases $p < -1/2$ and $p > -1/2$. If $p < -1/2$, then
\begin{align*}
f_{X \mid Y}(x) & \textcolor{black}{\, = f_{\mathrm{IG}}(x \, ; \, \sqrt{b/(a+2y)}, b)} \\
f_Y(y) &= \frac{1}{\mathcal{Z}_0} \, y^{-(p+3/2)} \, \exp\left\{ -\sqrt{b (a + 2y)} \right\} \\
\mathcal{Z}_0 &= \frac{b^{(p+1)/2} \, \sqrt{2} \, \Gamma(-p-1/2) \, K_p ( \sqrt{a b})}{a^{p/2}  \, \sqrt{\pi}},
\end{align*}
where $f_{\mathrm{IG}}(x \, ; \, \sqrt{b/(a+2y)}, b)$ is the PDF of the InvGauss$(\sqrt{b/(a+2y)}, b)$. If $p > -1/2$, then
\begin{align*}
f_{X \mid Y}(x)  & \,  \textcolor{black}{= f_\mathrm{IG}(x \, ; \, \sqrt{(b+2y)/a}, b + 2y)} \\
f_Y(y) &= \frac{1}{\mathcal{Z}_{1}} \, \frac{y^{p-1/2}}{\sqrt{b+2y}} \,  \exp \left\{ -\sqrt{a (b+2y)} \right\} \\
\mathcal{Z}_{1} &= \frac{b^{p/2} \, \sqrt{2} \, \Gamma(p+1/2) \, K_p ( \sqrt{a b})}{a^{p/2} \, \sqrt{\pi} }.
\end{align*}
\end{prop}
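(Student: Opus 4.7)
The plan is to exploit the factorization of the GIG density that is already flagged in the remark immediately after the proposition statement, namely
$$
f_X(x) \;\propto\; x^{p+1/2}\, f_{\mathrm{IG}}(x;\sqrt{b/a},b),
$$
and then insert one of the two gamma integral representations of $x^{p+1/2}$. Concretely, I would first compute the IG density with $(\mu,\lambda)=(\sqrt{b/a},b)$ explicitly, getting $f_{\mathrm{IG}}(x;\sqrt{b/a},b)=\sqrt{b/(2\pi)}\,e^{\sqrt{ab}}\,x^{-3/2}\,e^{-(ax+b/x)/2}$, so that the ratio $f_X(x)/[x^{p+1/2}f_{\mathrm{IG}}(x;\sqrt{b/a},b)]$ collapses to an $x$-free constant depending only on $a,b,p$. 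This is where the eventual prefactors $a^{p/2}\sqrt{\pi/2}/[b^{(p+1)/2}K_p(\sqrt{ab})]$ (times $e^{-\sqrt{ab}}$) come from.

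Next I would treat the two cases in parallel. For $p>-1/2$ substitute $x^{p+1/2}=\Gamma(p+1/2)^{-1}\int_0^\infty y^{p-1/2}e^{-y/x}\,dy$; then the factor $e^{-y/x}$ merely shifts $b\mapsto b+2y$ inside the exponent of the IG kernel, so that after completing the square one recognises
$$
e^{-y/x}\,f_{\mathrm{IG}}\!\bigl(x;\sqrt{b/a},b\bigr)=\sqrt{\tfrac{b}{b+2y}}\,e^{\sqrt{ab}-\sqrt{a(b+2y)}}\,f_{\mathrm{IG}}\!\bigl(x;\sqrt{(b+2y)/a},b+2y\bigr).
$$
For $p<-1/2$ I use the dual representation $x^{p+1/2}=\Gamma(-(p+1/2))^{-1}\int_0^\infty y^{-(p+3/2)}e^{-xy}\,dy$; now the factor $e^{-xy}$ modifies $a\mapsto a+2y$ instead, and the analogous identification gives
$$
e^{-xy}\,f_{\mathrm{IG}}\!\bigl(x;\sqrt{b/a},b\bigr)=e^{\sqrt{ab}-\sqrt{b(a+2y)}}\,f_{\mathrm{IG}}\!\bigl(x;\sqrt{b/(a+2y)},b\bigr).
$$

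After swapping the integration order (justified by Tonelli, since every integrand is positive on $\mathbb{R}_+^2$), one reads off the integrand as $f_{X\mid Y}(x\mid y)\,g(y)$ where $g(y)$ is, respectively, proportional to $y^{p-1/2}(b+2y)^{-1/2}e^{-\sqrt{a(b+2y)}}$ or $y^{-(p+3/2)}e^{-\sqrt{b(a+2y)}}$. The normalising constants $\mathcal{Z}_0,\mathcal{Z}_1$ are then forced by the fact that $f_X$ integrates to one: collecting the $e^{\pm\sqrt{ab}}$ factors cancels them, and the remaining prefactor $\Gamma(\pm(p+1/2))\cdot a^{p/2}\sqrt{\pi/2}/[b^{p/2}K_p(\sqrt{ab})]$ must equal $1/\mathcal{Z}_i$, which recovers the stated formulas after simplifying $\sqrt{\pi/2}^{-1}=\sqrt{2/\pi}$.

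I do not expect a real obstacle; the argument is essentially a two-line manipulation once the right IG reparametrisation is spotted. The only step requiring mild care is bookkeeping the normalising constants: one has to keep track of the $e^{\pm\sqrt{ab}}$ factors that appear when converting GIG to IG and then again when completing the square to get the new IG, and verify they cancel so that $\mathcal{Z}_0,\mathcal{Z}_1$ depend on $K_p(\sqrt{ab})$ but not on the exponential of $\sqrt{ab}$. A useful sanity check is to specialise to $p=1/2$ (respectively $p=-3/2$), where $\Gamma(p+1/2)=1$ (respectively $\Gamma(-p-1/2)=1$) and $K_{1/2}=K_{-1/2}$ have closed forms, and verify the identity against the known representation of $\mathrm{GIG}(1/2,a,b)$ as the reciprocal of an inverse Gaussian.
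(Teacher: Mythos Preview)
Your proposal is correct and follows essentially the same route as the paper: factor the GIG density as $x^{p+1/2}$ times an inverse Gaussian density, insert the appropriate gamma integral for $x^{p+1/2}$, apply Tonelli, and recognise the resulting conditional as an inverse Gaussian with shifted parameter $a\mapsto a+2y$ or $b\mapsto b+2y$. The only cosmetic difference is in how the normalising constant of $f_Y$ is extracted: you track all prefactors explicitly and use $\int f_X=1$, whereas the paper computes $f_Y(y)^{-1}=\int_0^\infty f_{X\mid Y}(x\mid y)/f_{Y\mid X}(y\mid x)\,\mathrm{d}x$, which reduces to the GIG normalising integral; both are pure bookkeeping and yield the same $\mathcal{Z}_0,\mathcal{Z}_1$.
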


\begin{proof} First, we find $f_{X\mid Y}(x \mid y)$. Along the way, we derive $f_{Y \mid X}(y \mid x)$, which will be useful for deriving a Gibbs sampler. We consider two cases: $p < -1/2$ and $p > -1/2$.

\textbf{Case $p < -1/2$:} Up to proportionality constants, the PDF $f_X(x)$ of the $\mathrm{GIG}(p, a, b)$ is
$$
f_X(x) \propto x^{p+1/2} \, f_{\mathrm{IG}}(x \, ; \,\sqrt{b/a},  b),
$$
{\textcolor{black}{where $f_{\mathrm{IG}}(x \, ; \,\mu,  \lambda)$ is the PDF of an inverse Gaussian:
$$
f_{\mathrm{IG}}(x \, ; \,\mu, \lambda) = \sqrt{\frac{\lambda}{2 \pi x^3}} \, \exp\left\{-\frac{\lambda (x - \mu)^2}{2 \mu^2 x}\right\} \, \mathbbm{1}(x > 0),
$$
for $\mu > 0$ and $\lambda > 0$.}

The term $x^{p+1/2}$ can be represented as follows
$$
x^{p+1/2} \underset{(p < -1/2)}{=} \frac{\int_0^\infty y^{-(p+3/2)} \, \exp \left( -x y \right) \, \mathrm{d} y}{\Gamma(-(p+1/2))} \, ,
$$
so we can write
$$
f_X(x) \propto f_{\mathrm{IG}}(x \, ; \,\sqrt{b/a},  b) \, \int_0^\infty y^{-(p+3/2)} \, \exp \left( -x y \right) \, \mathrm{d} y .
$$
If we define
$$
g(x,y) =  f_{\mathrm{IG}}(x \, ; \,\sqrt{b/a},  b) \, y^{-(p+3/2)} \, \exp \left( -x y \right)  \mathbbm{1}(x >0, y >0),
$$
then $g(x,y)$ is non-negative and continuous. Therefore, by Tonelli's theorem, 
the double integral on $\mathbbm{R}^2_+$ of $g(x,y)$ is equal to the iterated integral, which is finite. This implies that $g(x,y)$ can be normalized to be a joint PDF $f_{X,Y}(x, y) \propto g(x,y)$ whose marginal in $X$ is $f_X(x)$. By inspecting $g(x,y)$, we can easily deduce that 
\begin{align*}
    X \mid Y = y &\sim \mathrm{InvGauss}(\sqrt{b/(a+2y)}, b) \\
    Y \mid X = x &\sim \mathrm{Gamma}(-(p+1/2), x).
\end{align*}

\textbf{Case $p > -1/2$:} We use the same argument for $p < -1/2$ but with a different integral representation for $x^{p+1/2}$. In this case, we use the representation
$$
x^{p+1/2} \underset{(p > -1/2)}{=} \frac{\int_0^\infty y^{p-1/2} \, \exp( -y/x) \, \mathrm{d} y }{\Gamma(p+1/2)}.
$$
We can rewrite the PDF of the $\mathrm{GIG}(p, a, b)$ as
$$
f_X(x)  \propto  f_{\mathrm{IG}}(x \, ; \,\sqrt{b/a},  b) \, \int_0^\infty y^{p-1/2} \, \exp \left( -y/x \right) \, \mathrm{d} y.
$$
By Tonelli's theorem, the joint PDF of $(X,Y)$ is proportional to
$$
f_{X,Y}(x, y) \propto y^{p-1/2} \exp \left\{ - \sqrt{a (b+2 y)} \right\} \,  f_{\mathrm{IG}}(x \, ; \,\sqrt{(b+2y)/a}, b+2y).
$$
The conditional distributions are
\begin{align*}
    X \mid Y = y &\sim \mathrm{InvGauss}(\sqrt{(b+2y)/a}, b+2y) \\
    Y \mid X = x &\sim \mathrm{Gamma}(p+1/2, 1/x).
\end{align*}
Lastly, we find $f_Y(y)$. Again, we treat the cases $p < -1/2$ and $p > -1/2$ separately. Let $f_{\mathrm{G}}(x \, ; \, \alpha, \beta)$ be the PDF of the gamma distribution:
$$
f_{\mathrm{G}}(x \, ; \, \alpha, \beta) = \frac{\beta^\alpha}{\Gamma(\alpha)} y^{\alpha-1} \exp\{-\beta y\} \, \mathbbm{1}(y > 0) 
$$
for $\alpha > 0$ and $y > 0$. If $p < - 1/2$, then:
\begin{align*}
f_Y(y)^{-1} &= \int_{0}^\infty \frac{f_X(x \mid  y)}{f_Y(y \mid x)} \, \mathrm{d} x \\
&=  \int_{0}^\infty \frac{f_{\mathrm{IG}}(x \, ; \,\sqrt{b/(a+2y)}, b) }{ f_{\mathrm{G}}(x \, ; \, - (p +1/2), x) } \, \mathrm{d} x.
\end{align*}
Before we proceed with the integral, we write the explicit definitions of the densities. The inverse Gaussian density is
\begin{align*}
f_{\mathrm{IG}}(x \, ; \,\sqrt{b/(a+2y)}, b) &= \frac{b^{1/2} x^{-3/2}}{(2\pi)^{1/2}} \, \exp\left\{ \frac{-(a+2y)}{2x} \left[x^2 - 2x \sqrt{b/(a+2y)} + b/(a+2y) \right] \right\} \\
&= \frac{b^{1/2} x^{-3/2}}{(2\pi)^{1/2}} \exp\left\{ \, \frac{-(a+2y) x}{2} - \frac{b}{2x} + \sqrt{(a+2y)b} \right\}.
\end{align*}
The gamma density is
$$
f_G(y \, ; \, -(p+1/2), x) = \frac{x^{-p-1/2}}{\Gamma(- p -1/2)} y^{-p-3/2} \exp\{-yx\}.
$$
Therefore,
\begin{align*}
f_Y(y)^{-1} &=  \int_{0}^\infty \frac{f_{\mathrm{IG}}(x \, ; \,\sqrt{b/(a+2y)}, b) }{ f_{\mathrm{G}}(x \, ; \, - (p +1/2), x) } \, \mathrm{d} x \\
&= \mathcal{K}_0 \int_0^\infty \underbrace{x^{p-1} \exp\left\{ \frac{-(ax + b/x)}{2} \right\}}_{\propto \mathrm{GIG}(p,a,b)} \, \mathrm{d} x \\
&= \mathcal{K}_0 \frac{2 K_p(\sqrt{ab})}{(a/b)^{p/2}}
\end{align*}
where
$$
\mathcal{K}_0 = \frac{b^{1/2} \Gamma(-p-1/2) y^{p+3/2} \exp\{\sqrt{b(a+2y)}\}}{(2\pi)^{1/2}} .
$$
Rearranging normalizing constants, we obtain
\begin{align*}
f_Y(y) &= \frac{1}{\mathcal{Z}_0} \, y^{-(p+3/2)} \, \exp\left\{ -\sqrt{b (a + 2y)} \right\} \\
\mathcal{Z}_0 &= \frac{b^{(p+1)/2} \, \sqrt{2} \, \Gamma(-p-1/2) \, K_p ( \sqrt{a b})}{a^{p/2}  \, \sqrt{\pi}}. 
\end{align*}}
\noindent The case $p > -1/2$ can be handled similarly:
\begin{align*}
f_Y(y)^{-1} &= \int_{0}^\infty \frac{f_X(x \mid  y)}{f_Y(y \mid x)} \, \mathrm{d} x \\
&=  \int_{0}^\infty \frac{f_{\mathrm{IG}}(x \, ; \,\sqrt{(b+2y)/a}, b+2y) }{ f_{\mathrm{G}}(x \, ; \, p +1/2, 1/x) } \, \mathrm{d} x \\
&= \mathcal{Z}_{1} \, y^{-(p-1/2)} \, (b+2y)^{1/2}    \, \exp \left\{ a^{1/2}(b+2y)^{1/2} \right\},
\end{align*}
where
$$
\mathcal{Z}_{1} = \frac{2^{1/2} \Gamma(p+1/2) K_p\left( a^{1/2} b^{1/2} \right)}{\pi^{1/2} (a/b)^{p/2}}.
$$
\end{proof}

\begin{prop}
    Let $f_{X \mid Y}(x)$ and $f_Y(y)$ be as defined in Proposition 1. If $p < -1/2$, then $Y \mid X = x \sim \mathrm{Gamma}(-(p+1/2), x)$. If $p > -1/2$, then $Y \mid X = x \sim \mathrm{Gamma}(p+1/2, 1/x)$.
\end{prop}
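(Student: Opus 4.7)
The plan is to obtain each conditional density by Bayes' rule, writing
\[
f_{Y \mid X}(y \mid x) \;\propto\; f_{X \mid Y}(x \mid y) \, f_Y(y)
\]
as a function of $y$ only, substituting the explicit expressions from Proposition~1, and identifying the resulting kernel as a Gamma density. The two cases are handled symmetrically, so I would present the argument fully for $p < -1/2$ and then sketch the parallel computation for $p > -1/2$.

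For $p < -1/2$, the key step is to expand the inverse Gaussian PDF $f_{\mathrm{IG}}(x \, ; \, \sqrt{b/(a+2y)}, b)$ in the form
\[
\sqrt{\tfrac{b}{2\pi x^3}}\,\exp\!\Bigl\{-\tfrac{(a+2y)x}{2} - \tfrac{b}{2x} + \sqrt{(a+2y)b}\Bigr\},
\]
so that the $y$-dependence separates into the factor $\exp\{-yx\}$ from the linear term $-(a+2y)x/2$ and the factor $\exp\{\sqrt{(a+2y)b}\}$ from the completed-square term $\lambda/\mu$ of the inverse Gaussian. Multiplying by $f_Y(y) \propto y^{-(p+3/2)} \exp\{-\sqrt{b(a+2y)}\}$, the two $\exp\{\pm\sqrt{(a+2y)b}\}$ factors cancel exactly, and all factors that do not depend on $y$ can be absorbed into the normalizing constant. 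What remains is proportional to $y^{-(p+3/2)} \exp\{-xy\}$, which is the kernel of a $\mathrm{Gamma}(-(p+1/2), x)$ density. Since the density $f_Y$ is already normalized and $f_{X\mid Y}$ is a proper conditional density, normalization of $f_{Y\mid X}$ is automatic.

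For $p > -1/2$, the same scheme applies: the inverse Gaussian $f_{\mathrm{IG}}(x \, ; \, \sqrt{(b+2y)/a}, b+2y)$ expands to
\[
\sqrt{\tfrac{b+2y}{2\pi x^3}}\,\exp\!\Bigl\{-\tfrac{ax}{2} + \sqrt{a(b+2y)} - \tfrac{b+2y}{2x}\Bigr\},
\]
and multiplying by $f_Y(y) \propto y^{p-1/2}(b+2y)^{-1/2}\exp\{-\sqrt{a(b+2y)}\}$ produces cancellations in both the $\sqrt{a(b+2y)}$ exponent and the $(b+2y)^{\pm 1/2}$ prefactor. The surviving $y$-dependence is $y^{p-1/2}\exp\{-y/x\}$, which is the $\mathrm{Gamma}(p+1/2, 1/x)$ kernel.

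I do not expect a serious obstacle here; the heart of the proof is the cancellation of the square-root exponentials, which is built into the way $f_Y$ was constructed in Proposition~1. In fact, these conditionals already appeared implicitly in the Tonelli argument used to derive $f_Y$, so the present proposition could also be stated as a corollary of that derivation, with the explicit Bayes-rule computation given here for completeness.
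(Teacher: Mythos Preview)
Your proposal is correct and essentially matches the paper's approach: both read off the Gamma conditional from the joint density $f_{X,Y}(x,y)\propto f_{X\mid Y}(x\mid y)f_Y(y)$. The only cosmetic difference is that the paper quotes the joint in the pre-simplified form obtained during the proof of Proposition~1 (e.g., $f_{X,Y}(x,y)\propto f_{\mathrm{IG}}(x;\sqrt{b/a},b)\,y^{-(p+3/2)}e^{-xy}$ for $p<-1/2$), from which the Gamma kernel is immediate, whereas you start from the final statements of Proposition~1 and explicitly redo the cancellation of the $\exp\{\pm\sqrt{\cdot}\}$ and $(b+2y)^{\pm1/2}$ factors---a point you yourself note in your closing remark.
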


\begin{proof}
    We found these distributions in Proposition 1. They can be derived after inspecting the joint distributions (up to proportionality constants). If $p < -1/2$, then 
$$
   f_{X,Y}(x,y) \propto f_{\mathrm{IG}}(x \, ; \,\sqrt{b/a},  b) \, y^{-(p+3/2)} \, \exp \left( -x y \right),
$$
from which it is straightforward to deduce that
\begin{align*}
    Y \mid X = x &\sim \mathrm{Gamma}(-(p+1/2), x).
\end{align*}
If $p > -1/2$, then
$$
f_{X,Y}(x, y) \propto y^{p-1/2} \exp \left\{ - \sqrt{a (b+2 y)} \right\} \,  f_{\mathrm{IG}}(x \, ; \,\sqrt{(b+2y)/a}, b+2y),
$$
which implies that
\begin{align*}
    Y \mid X = x &\sim \mathrm{Gamma}(p+1/2, 1/x).
\end{align*}
\end{proof}

\begin{prop} The data-augmented composition Gibbs sampler in Algorithm 1 is geometrically ergodic.
\end{prop}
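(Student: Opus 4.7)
The plan is to apply Theorem 3.5 of \cite{johnson2015geometric}, which reduces geometric ergodicity of a two-block Gibbs sampler to verifying a geometric drift condition $PV \leq \lambda V + L$ together with a minorization condition on a small set. By the reciprocal identity $X \sim \mathrm{GIG}(p,a,b) \Longleftrightarrow 1/X \sim \mathrm{GIG}(-p,b,a)$ used earlier in the paper, it suffices to treat the case $p > -1/2$, in which the full conditionals are $Y \mid X = x \sim \mathrm{Gamma}(p+1/2, 1/x)$ and $X' \mid Y = y \sim \mathrm{InvGauss}(\sqrt{(b+2y)/a}, b+2y)$. Once geometric ergodicity is established for the marginal chain on $X$, it transfers to the joint $(X,Y)$ chain by standard two-block Gibbs arguments.

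For the drift I would propose the Lyapunov function $V(x) = x + x^{-1}$, which is bounded below by $2$ and unbounded off compact subsets of $\mathbb{R}_+$. Using $\mathbb{E}[Y \mid X = x] = (p+1/2)\,x$, the inverse Gaussian identities $\mathbb{E}[X' \mid Y = y] = \sqrt{(b+2y)/a}$ and $\mathbb{E}[(X')^{-1} \mid Y = y] = \sqrt{a/(b+2y)} + (b+2y)^{-1}$, Jensen's inequality applied to the concave map $y \mapsto \sqrt{b + 2y}$, and the pointwise bounds $\sqrt{a/(b+2y)} \leq \sqrt{a/b}$ and $(b+2y)^{-1} \leq 1/b$, I obtain
\begin{equation*}
PV(x) \;\leq\; \sqrt{(b + (2p+1)\,x)/a} \;+\; \sqrt{a/b} \;+\; 1/b.
\end{equation*}
The right-hand side grows like $O(\sqrt{x})$ as $x \to \infty$ and remains bounded as $x \to 0$, whereas $V(x)$ diverges like $x$ and $1/x$ respectively. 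Hence $PV(x)/V(x) \to 0$ at both ends of $\mathbb{R}_+$, so that for any $\lambda \in (0,1)$ there exists a compact set $C = [c_1, c_2] \subset \mathbb{R}_+$ outside of which $PV \leq \lambda V$, while on $C$ one has $PV \leq \sup_{C} PV =: L < \infty$. This yields the geometric drift $PV \leq \lambda V + L$ globally.

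For the minorization, on any such compact $C$ both the gamma density $f_{Y \mid X}(\,\cdot \mid x)$ and the inverse Gaussian density $f_{X' \mid Y}(\,\cdot \mid y)$ are jointly continuous and strictly positive on the interior of their supports, with parameters depending continuously on $x$. A routine compactness argument then yields a uniform minorization $P(x, \cdot) \geq \varepsilon\,\nu(\cdot)$ for all $x \in C$, with $\nu$ a probability measure supported on $\mathbb{R}_+$ and $\varepsilon > 0$, whence $C$ is a small set and the hypotheses of the cited theorem are satisfied. The main obstacle will be executing the drift calculation cleanly: although the needed moments of the gamma and inverse Gaussian are all available in closed form, the drift step composes two distributions whose parameters depend on the state $x$ in a nontrivial way, so one must simultaneously control the $x \to 0$ and $x \to \infty$ tails and check that the direction of Jensen's inequality is favorable on each of the three summands contributing to $PV$.
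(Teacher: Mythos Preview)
Your drift-and-minorization argument is sound in spirit, but it diverges from the paper's proof in two ways worth flagging. First, Theorem~3.5 of \cite{johnson2015geometric} is \emph{not} a single-function drift-plus-minorization result; it is a two-function criterion requiring $f,g:[0,\infty)\to[1,\infty)$, constants $j,k,m,n>0$ with $jm<1$, and
\[
\E[f(X)\mid Y=y]\le j\,g(y)+k,\qquad \E[g(Y)\mid X=x]\le m\,f(x)+n,
\]
together with compactness of the sublevel sets $\{y:g(y)\le d\}$. The paper verifies exactly this, choosing quadratic $f$ and $g$ (e.g.\ $f(x)=1+\alpha(x-\theta)^2$, $g(y)=1+\beta(y-\mu)^2$ for $p>-1/2$) and computing the two conditional second moments exactly. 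Your argument instead follows the classical Meyn--Tweedie/Rosenthal route with a single Lyapunov function $V(x)=x+1/x$, Jensen's inequality on the concave map $y\mapsto\sqrt{b+2y}$, and an explicit small-set minorization. That route is correct; just cite the appropriate source rather than Johnson--Jones.

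Second, the reciprocal reduction does not quite cover Algorithm~1 as written. For $p<-1/2$ the algorithm uses the augmentation $Y\mid X\sim\mathrm{Gamma}(-p-1/2,x)$ and $X\mid Y\sim\mathrm{InvGauss}(\sqrt{b/(a+2y)},b)$, whereas applying the $p>-1/2$ sampler to $1/X\sim\mathrm{GIG}(-p,b,a)$ and inverting produces a \emph{different} Markov chain (the gamma shape becomes $-p+1/2$, not $-p-1/2$). So you would still need to handle $p<-1/2$ directly. Fortunately your same $V$ works there: with $\E[Y\mid X=x]=(-p-1/2)/x$ and $\E[1/X'\mid Y=y]=\sqrt{(a+2y)/b}+1/b$, one gets $PV(x)\le\sqrt{b/a}+\sqrt{(a+(1-2p)/x)/b}+1/b$, which is $O(1/\sqrt{x})$ as $x\to0$ and bounded as $x\to\infty$, so $PV/V\to0$ at both ends. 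What your approach buys is a simpler Lyapunov function and a transparent use of Jensen; what the paper's approach buys is avoidance of the explicit minorization step, since compactness of sublevel sets suffices under their Theorem~3.5.
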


\begin{proof} We prove this result by checking that the conditions of Theorem 3.5 in \cite{johnson2015geometric} are satisfied. The proof amounts to showing that there exist functions $f, g: [0, \infty) \rightarrow [1, \infty)$ and constants $j, k, m, n > 0$ with $jm < 1$ so that
$$
\E[f(X) \mid Y = y] \le j g(y) + k, \qquad \E[g(Y) \mid X = x] \le m f(x) + n,
$$
with $C_d = \{y : g(y) \le d\}$ compact for all $d > 0$. We prove the results for $p < -1/2$ and $p > 1/2$ separately.

\textbf{Case $p < -1/2$:} Define
\begin{align*}
\theta &= 3/(2b) \\
\mu &= 3(1/2-p)/(2b) \\
f(x) &= 1 + \alpha (1/x - \theta)^2, \qquad 0 < \alpha < b^2/(p^2-1/4) \\
g(y) &= 1 + \beta (y-\mu)^2, \qquad \alpha^2/b^2 < \beta < \alpha/(p^2-1/4).
\end{align*}
The set $\{y : g(y) \le d \}$ is compact for all $d > 0$ and $f, g: [0, \infty) \rightarrow [1, \infty)$.

We find the conditional expectations of $f$ and $g$ and bound them. We start with the conditional expectation of $f$:
\begin{align*}
    \E[f(X) \mid Y = y] &= 1 + \alpha \left\{ \mathrm{Var}(1/X \mid Y = y) + \left[\E(1/X \mid Y = y) - \theta \right]^2 \right\}.
\end{align*}
Using well-known properties of the inverse Gaussian distribution, we find
\begin{align*}
    \mathrm{Var}(1/X \mid Y = y) &= {(a+2y)^{1/2}}{b^{-3/2}} + {2}{b^{-2}} \\
    \E(1/X \mid Y = y) - \theta &= (a+2y)^{1/2}{b}^{-1/2} + b^{-1} - (3/2) b^{-1}.
\end{align*}
Rearranging terms, we obtain
\begin{align*} 
    \E[f(X) \mid Y = y] &= 1 + 9\alpha/(4b^2)  + \alpha a/b + 2 \alpha y/b \\
    &= 1 + 9\alpha/(4b^2)  + \alpha a/b + 2 \alpha \mu/b + 2 \alpha (y-\mu)/b \\
    &\le 1 + 9\alpha/(4b^2)  + \alpha a/b + 2 \alpha \mu/b + g(y),
\end{align*}
where the last inequality holds because $\beta \ge \alpha^2/b^2$. The inequality can be rewritten as
\begin{align*}
\E[f(X) \mid Y = y] &\le j g(y) + k \\
j &= 1 \\
k &= 1 + 9\alpha/(4b^2)  + \alpha a/b + 2 \alpha \mu/b.
\end{align*}
Both $j$ and $k$ are positive, which is required by Theorem 3.5 in \cite{johnson2015geometric}. Now, we find the conditional expectation of $g$ and bound it:
\begin{align*}
 \E[g(Y) \mid X = x] &= 1 + \beta \left\{ \mathrm{Var}(Y \mid X = x) + \left[\E(Y \mid X =x) - \mu \right]^2 \right\} \\
 &= 1 + \beta \left\{ -(p+1/2) x^{-2} + \left[ -(p+1/2)x^{-1} - \mu \right]^2 \right\} \\
 &= 1 + \beta (1/2-p)/(4b^2) + {\beta (p^2-1/4)}  \left[1/x - \theta \right]^2 \\
 &\le 1 + 9 \beta (1/2-p)/(4b^2)   + \beta (p^2-1/4) f(x)/\alpha.
\end{align*}
We can rewrite the inequality as
\begin{align*}
 \E[g(Y) \mid X = x] &\le m f(x) + n \\
 m &=   \beta (p^2-1/4)/\alpha \\
 n &= 1 + 9 \beta (1/2-p)/(4b^2),
\end{align*}
where $m, n > 0$, as required by Theorem 3.5 in \cite{johnson2015geometric}. It remains to show that $j m < 1$. Given our choices of $\alpha$ and $\beta$,
$$
0 < {\beta (p^2-1/4)}/{\alpha}  < 1
$$
and the interval $(\alpha^2/b, \alpha/(p^2-1/4))$ is nonempty. This completes the proof for this case.

\textbf{Case $p > -1/2:$} Let $0 < \gamma < 1$ and
\begin{align*}
\theta &= 1/(2a) \\
\mu &= (p+1/2)/(2a) \\
f(x) &= 1 + \alpha (x - \theta)^2, \qquad \alpha > 0 \\
g(y) &= 1 + \beta(y-\mu)^2, \qquad \beta > \max[\alpha^2/(\gamma^2 a), {\alpha}/{(p+1/2)^2}].
\end{align*}
Again, $\{y : g(y) \le d \}$ is compact for all $d > 0$ and $f,g: [0, \infty) \rightarrow [1, \infty)$. We bound the conditional expectations. We start with $f$:
\begin{align*}
\E[f(X) \mid Y = y] &= 1 + \alpha \left\{ \mathrm{Var}(X \mid Y = y) +  \left[ \E(X \mid Y = y)  - \theta \right]^2 \right\} \\
&= 1 + \alpha \left\{  {(b+2y)^{1/2}}{a^{-3/2}} +  \left[ (b+2y)^{1/2} a^{-1/2}   - \theta \right]^2 \right\} \\
&= 1 +   \alpha /(4a^2)+ \alpha b /a  +  {2 \alpha}\mu/a + {2 \alpha} (y-\mu)/a \\
&\le  1 +   \alpha /(4a^2)+ \alpha b /a  +  {2 \alpha}\mu/a + \gamma g(y).
\end{align*}
The inequality holds because $\beta > \alpha^2/(\gamma^2 a^2)$. Therefore,
\begin{align*}
\E[f(X) \mid Y = y]  &\le j g(y) + k \\
j &= \gamma \\
k &= 1 +   \alpha /(4a^2)+ \alpha b /a  +  {2 \alpha}\mu/a,
\end{align*}
where $j$ and $k$ are positive, as required by Theorem 3.5 in \cite{johnson2015geometric}. Finally, we bound the conditional expectation of $g$:
\begin{align*}
    \E[g(Y) \mid X = x] &= 1 + \beta \left\{ \mathrm{Var}(Y \mid X = x) + [\E(Y \mid X = x) - \mu]^2\right\} \\
    &= 1 + \beta \left\{ (p+1/2)x^2 + [(p+1/2)x - \mu]^2\right\} \\
    &= 1 + \beta (p+1/2)x^2 + \beta (p+1/2)^2 (x - \theta)^2 \\
    &= \beta (p+1/2)x^2 + [\beta (p+1/2)^2 - \alpha](x-\theta)^2 + f(x) \\
    &= Q(x) + f(x) \\
    &\le c_0 + f(x),
\end{align*}
where $c_0 > 0$ because the quadratic $Q(x)$ is strictly positive. We can rewrite the inequality as
\begin{align*}
    \E[g(Y) \mid X = x] &\le m f(x) + n \\
    m &= 1 \\
    n &= c_0,
\end{align*}
where $m, n > 0$. It remains to show that $j m < 1$. Since $0 < \gamma < 1$ by assumption, the proof is now complete.
\end{proof}

{\begin{prop}  If $X \sim \mathrm{GIG}(p, a, b)$ and $p >1$, then $X =_d Y + E $ for independent random variables $Y$ and $E$ with
\begin{align*}
f_{Y}(y) &=  \textcolor{black}{w \, f_{\mathrm{GIG}}(y \, ; \, p-2, a, b) + (1 - w)  \, f_{\mathrm{GIG}}(y \, ; \, p-1, a, b)} \\ 
w &= {K_{p-2}(\sqrt{ab})}/{K_p(\sqrt{ab})} \in (0, 1)
\end{align*}
and $E \sim \mathrm{Exp}(a/2)$, \textcolor{black}{where $f_{\mathrm{GIG}}(y \, ; \, p, a, b)$ is the PDF of the GIG$(p,a,b)$.}
\end{prop}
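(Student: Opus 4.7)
The plan is to verify the claim through moment generating functions, reducing the target identity to the standard three-term recurrence for modified Bessel functions of the second kind. This gives a self-contained route that parallels the general convolution criterion of \cite{jauch2021mixture} referenced in the main text.

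First I would record the three MGFs involved. For $t<a/2$ the MGF of the $\mathrm{GIG}(p,a,b)$ is
$$
M_X(t)=\left(\frac{a}{a-2t}\right)^{p/2}\frac{K_p(\sqrt{(a-2t)b})}{K_p(\sqrt{ab})},
$$
which follows by plugging the GIG PDF into $\int e^{tx} f_X(x)\,\mathrm{d}x$ and recognising the remaining integral as the GIG normaliser with $a$ shifted to $a-2t$. Meanwhile $M_E(t)=a/(a-2t)$ for $E\sim\mathrm{Exp}(a/2)$. Writing $u=\sqrt{(a-2t)b}$ and $u_0=\sqrt{ab}$, so that $(a/(a-2t))^{1/2}=u_0/u$, the identity $M_X(t)=M_Y(t)M_E(t)$ simplifies, after cancelling a common factor of $(a/(a-2t))^{p/2}$, to
$$
\frac{K_p(u)}{K_p(u_0)}=w\,\frac{K_{p-2}(u)}{K_{p-2}(u_0)}+(1-w)\,\frac{u_0}{u}\,\frac{K_{p-1}(u)}{K_{p-1}(u_0)}.
$$

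Next I would substitute $w=K_{p-2}(u_0)/K_p(u_0)$ and apply the classical recurrence $K_\nu(z)-K_{\nu-2}(z)=\tfrac{2(\nu-1)}{z}K_{\nu-1}(z)$ twice. Used with $z=u_0$ and $\nu=p$, it gives $1-w=\tfrac{2(p-1)}{u_0}K_{p-1}(u_0)/K_p(u_0)$. Used again with $z=u$ and $\nu=p$, it collapses the resulting right-hand side to $\tfrac{1}{K_p(u_0)}\bigl[K_{p-2}(u)+\tfrac{2(p-1)}{u}K_{p-1}(u)\bigr]=K_p(u)/K_p(u_0)$, as required. The same recurrence at $z=u_0$ confirms $K_p(u_0)>K_{p-2}(u_0)$ when $p>1$, so $w\in(0,1)$ and the mixture is legitimate; uniqueness of MGFs on an open interval about $0$ then promotes the MGF identity to equality in distribution.

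The main obstacle is purely bookkeeping: tracking the powers of $(a-2t)$ that emerge from GIGs of orders $p-2$ and $p-1$, and making sure the residual $u_0/u$ factor lands exactly where the Bessel recurrence needs it. Introducing the variables $u$ and $u_0$ up front trivialises this step. The hypothesis $p>1$ enters solely to make the recurrence coefficient $2(p-1)/u_0$ positive, which is what promotes the representation from a signed combination of densities to a genuine probabilistic mixture.
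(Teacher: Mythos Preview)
Your argument is correct. The MGF of the $\mathrm{GIG}(p,a,b)$ is indeed $(a/(a-2t))^{p/2}K_p(\sqrt{(a-2t)b})/K_p(\sqrt{ab})$ for $t<a/2$, and after the cancellation of the common factor $(a/(a-2t))^{p/2}$ and the substitution $w=K_{p-2}(u_0)/K_p(u_0)$, the identity $M_X=M_YM_E$ collapses exactly to the three-term recurrence $K_p(u)=K_{p-2}(u)+\tfrac{2(p-1)}{u}K_{p-1}(u)$, applied once at $u_0$ to rewrite $1-w$ and once at $u$ to finish. The positivity of $2(p-1)/u_0$ for $p>1$ yields $w\in(0,1)$, and since the MGFs are finite on $(-\infty,a/2)\ni 0$, equality of MGFs gives equality in law.

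This is a genuinely different route from the paper's proof. The paper proceeds constructively: it observes that the ratio $f_{\mathrm{GIG}}(x;p,a,b)/f_{\mathrm{Exp}}(x;a/2)$ is monotone non-decreasing for $p>1$, invokes Theorem~2.4 of \cite{jauch2021mixture} to write the GIG density as a mixture of left-truncated exponentials, and then computes the mixing density $f_Y(y)=(g/f)'(y)\int_y^\infty f(s)\,\mathrm{d}s$ explicitly, recognising it as the stated two-component GIG mixture; the convolution statement then follows. Your approach is verificational rather than constructive: it takes the proposed $f_Y$ as given and confirms the convolution identity via MGFs and the Bessel recurrence. The advantage of your route is that it is entirely self-contained (no external mixture theorem) and makes transparent that the result is algebraically equivalent to the classical Bessel recurrence; the advantage of the paper's route is that it explains \emph{where} the decomposition comes from and would generalise to other target densities whose ratio to an exponential is monotone.
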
}  
\color{black}

\begin{proof}
First, we prove that the PDF of the generalized inverse Gaussian distribution for $p > 1$ can be written as a continuous mixture of truncated exponential random variables. After that, it will be straightforward to prove the main result. More precisely, we show that
\begin{align*}
f_X(x) &= \int_0^\infty  f_{X \mid Y}(x \mid y) f_Y (y) \, \mathrm{d} y  \\ &= \int_0^\infty \frac{a}{2} \exp\left\{ - \frac{a}{2} (x - y) \right\} \mathbbm{1}(x \ge y) \, f_Y(y) \, \mathrm{d} y,
\end{align*}
where  
\begin{align*}
f_{Y}(y) &= w \, f_{\mathrm{GIG}}(y \, ; \, p-2, a, b) + (1 - w)  \, f_{\mathrm{GIG}}(y \, ; \, p-1, a, b) \\
w &= \frac{K_{p-2}(\sqrt{ab})}{K_p(\sqrt{ab})}.
\end{align*}
Let $a > 0, b > 0, p > 1$, and define:
\begin{align*}
f(x) &=  \frac{a}{2} \exp\left(- \frac{a}{2} x \right) \, \mathbbm{1}(x > 0) \\
\qquad g(x) &= \frac{(a/b)^{p/2}}{2 K_p(\sqrt{ab})} x^{p-1} \exp\left\{- \frac{1}{2} (a x + b/x) \right\} \, \mathbbm{1}(x > 0). 
\end{align*}
The ratio 
$$
\frac{g(x)}{f(x)} = \frac{a^{p/2-1}}{b^{p/2} K_p(\sqrt{ab})} \, x^{p-1} \, \exp\left( - \frac{b}{2x} \right) \,  \mathbbm{1}(x > 0)
$$
is monotone non-decreasing in $x > 0$ for $p > 1$, so we can apply Theorem 2.4 in \cite{jauch2021mixture} to write $g(x)$ as a mixture of truncated $\mathrm{Exp}(a/2)$ random variables:
$$
g(x)  =  \int_0^\infty \frac{a}{2} \exp\left\{ - \frac{a}{2} (x - y) \right\} \mathbbm{1}(x \ge y) \, f_Y(y) \, \mathrm{d} y.
$$
By Theorem 2.4 in \cite{jauch2021mixture}, the mixing density $f_Y(y)$ is
\begin{align*}
f_Y(y) &=  \left(\frac{g(y)}{f(y)} \right)' \, \int_y^\infty f(s) \, \mathrm{d} s   \\
&=  \frac{a^{p/2-1}}{b^{p/2} K_p(\sqrt{ab})} \, \left[\frac{b}{2} y^{p-3} + (p-1) y^{p-2}  \right] \exp\left\{- \frac{1}{2} (ay + b/y) \right\}  \, \mathbbm{1}(y > 0) \\
&=   w \, f_{\mathrm{GIG}}(y \, ; \, p-2, a, b) + (1 - w)  \, f_{\mathrm{GIG}}(y \, ; \, p-1, a, b) .
\end{align*}
Now, we prove the main result. Let $\tilde{X} = Y + E$. It remains to show that $\tilde{X} =_d X$. By the convolution formula,
\begin{align*}
f_{\tilde{X}}(x) &= \int_0^\infty f_E(x-y) f_Y(y)  \, \mathrm{d} y \\
&= \int_0^\infty \frac{a}{2} \exp\left\{ - \frac{a}{2} (x - y) \right\} \mathbbm{1}(x \ge y) \, f_Y(y) \, \mathrm{d} y \\
&= f_{\mathrm{GIG}}(x \, ; \, p, a, b),
\end{align*}
as we wanted to show.
\end{proof}

\begin{prop} Let $G_p(x)$ be the CDF of the $\mathrm{GIG}(p,a,b)$. For half-integer $p$ such that $|p| > 1/2$, the following recurrence holds:
$$
G_p(x) = w G_{p-2}(x) + (1-w) G_{p-1}(x) - \exp\left\{ \frac{-ax}{2} \right\} I_p(x),
$$
where
$$
I_p(x) = \frac{w (ab)^{(p-2)/2}}{2^{p-1} K_{p-2}(\sqrt{ab})} \Gamma\left(2-p, \frac{b}{2x}\right) + \frac{(1-w) (ab)^{(p-1)/2}}{2^p K_{p-1}(\sqrt{ab})} \Gamma\left(1-p, \frac{b}{2x}\right)
$$
and $\Gamma( \cdot,  \cdot)$ is the incomplete gamma function  \citep{abramowitz1968handbook}.
\end{prop}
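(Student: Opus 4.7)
The plan is to derive the recurrence directly from Proposition 3, which gives the decomposition $X =_d Y + E$ with $Y$ and $E$ independent, $E \sim \mathrm{Exp}(a/2)$, and $f_Y$ a two-component GIG mixture. Conditioning on $Y$,
$$
G_p(x) = \int_0^x P(E \le x - y) \, f_Y(y) \, dy = F_Y(x) - e^{-ax/2} \int_0^x e^{ay/2} f_Y(y) \, dy,
$$
where $F_Y$ is the CDF of $Y$. Since $f_Y(y) = w\, f_{\mathrm{GIG}}(y; p-2, a, b) + (1-w)\, f_{\mathrm{GIG}}(y; p-1, a, b)$, the first term gives $F_Y(x) = w G_{p-2}(x) + (1-w) G_{p-1}(x)$, which matches the first two summands in the target recurrence.

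The remaining work is to identify the tail integral with $I_p(x)$. For either mixture component $q \in \{p-2, p-1\}$, I would note that the exponential factor $e^{ay/2}$ cancels the $\exp(-ay/2)$ piece of the GIG density, reducing the task to evaluating
$$
\int_0^x y^{q-1} \exp\!\left(-\frac{b}{2y}\right) dy.
$$
The substitution $u = b/(2y)$ turns this into $(b/2)^q \int_{b/(2x)}^\infty u^{-q-1} e^{-u} du = (b/2)^q \,\Gamma(-q, b/(2x))$. Pulling in the GIG normalizing constant $(a/b)^{q/2} / [2 K_q(\sqrt{ab})]$ yields $(ab)^{q/2} \,\Gamma(-q, b/(2x)) / [2^{q+1} K_q(\sqrt{ab})]$. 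Substituting $q = p-2$ and $q = p-1$, and weighting by $w$ and $1-w$, gives exactly the two terms of $I_p(x)$.

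The main obstacle is nothing conceptual, just careful bookkeeping of the constants $(ab)^{(p-2)/2}/[2^{p-1} K_{p-2}(\sqrt{ab})]$ and $(ab)^{(p-1)/2}/[2^p K_{p-1}(\sqrt{ab})]$ after the substitution, and making sure the definition of the upper-incomplete gamma $\Gamma(s,z) = \int_z^\infty u^{s-1} e^{-u} du$ aligns with the convention of \cite{abramowitz1968handbook}. The statement restricts to half-integer $p$ with $|p| > 1/2$, which means $p \ge 3/2$ (so Proposition 3 applies, since it requires $p > 1$) or $p \le -3/2$; the negative case can be handled by the duality $1/X \sim \mathrm{GIG}(-p, b, a)$ applied to the positive case, though in practice one only needs to recurse downward from the largest $|p|$ of interest until reaching the $\mathrm{GIG}(\pm 1/2)$ base cases.
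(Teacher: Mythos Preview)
Your proposal is correct and follows essentially the same route as the paper: starting from the convolution decomposition $X =_d Y + E$, conditioning on $Y$ to split $G_p(x)$ into $F_Y(x)$ minus $e^{-ax/2}\int_0^x e^{ay/2} f_Y(y)\,dy$, and then identifying the two pieces. You actually give more detail than the paper on the evaluation of $I_p(x)$ (the paper simply states the final form), and your substitution $u=b/(2y)$ and the resulting constants check out exactly against the stated $I_p(x)$.
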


\begin{proof}
    Let $G_p$ be the CDF of $\mathrm{GIG}(p, a, b)$. Then,
\begin{align*}
    G_p(x) &=  \int_0^\infty  \left( \int_y^{\max(x,y)} \frac{a}{2} \exp\left\{ - \frac{a}{2} (x - y) \right\}  \, \mathrm{d} x  \right) \, f_Y(y)  \, \mathrm{d} y  \\
    &= \int_{0}^\infty \left[1- \exp\left\{\frac{a(y-\max(x,y))}{2} \right\} \right] \, f_Y(y) \, \mathrm{d} y  \\
    &= \int_0^x \left[1- \exp\left\{\frac{a(y-x)}{2} \right\} \right] \, f_Y(y) \, \mathrm{d} y \\
    &= \int_0^x f_Y(y) \, \mathrm{d} y - \exp\left(\frac{-ax}{2}\right) \int_0^x \exp\left(\frac{ay}{2}\right) f_Y(y) \, \mathrm{d} y.
\end{align*}
On the one hand,
$$
\int_0^x f_Y(y) \, \mathrm{d} y = w G_{p-2}(x) + (1-w) G_{p-1}(x),
$$
where $G_{p-2}$  and $G_{p-1}$ are the CDFs of $\mathrm{GIG}(p-2, a, b)$ and $\mathrm{GIG}(p-1, a, b)$, respectively. On the other hand,
\begin{align*}
 &\int_0^x \exp\left(\frac{ay}{2}\right) f_Y(y) \, \mathrm{d} y =  I_p(x) \\
 &I_p(x) = \frac{w (ab)^{(p-2)/2}}{2^{p-1} K_{p-2}(\sqrt{ab})} \Gamma\left(2-p, \frac{b}{2x}\right) + \frac{(1-w) (ab)^{(p-1)/2}}{2^p K_{p-1}(\sqrt{ab})} \Gamma\left(1-p, \frac{b}{2x}\right),
\end{align*}
where $\Gamma( \cdot,  \cdot)$ is the incomplete gamma function  \citep{abramowitz1968handbook}.
\end{proof}

\subsection{\textcolor{black}{Algorithms}}

\color{black}

In this section, we give pseudocode to implement the algorithms. In Algorithm 3 (CDF evaluation for half-integer $p$), we take a bottom-up approach (a \texttt{for} loop). This is more efficient than a recursive function because it avoids redundant computations.

\begin{algorithm}[H]
\caption{Data-augmented composition Gibbs sampler for $\mathrm{GIG}(p, a, b)$}\label{alg:gibbs}
\begin{algorithmic}
\Require $p \in \mathbb{R}, a > 0,  b > 0, n_{\mathrm{sim}} \in \{1, 2, \, ... \}$
\State $x \gets \text{vector}(\text{length} = n_{\mathrm{sim}})$
\If{$p < -1/2$}
\State $y \gets -(p+1/2)\sqrt{b/a}$ 
\For{$i$ in 1 to $n_\mathrm{sim}$}
\State $x[i] \sim \mathrm{InvGauss}(\sqrt{b/(a+2y)},  b)$
\State $y \sim \mathrm{Gamma}(-(p+1/2), x)$
\EndFor
\ElsIf{$p = -1/2$}
\For{$i$ in 1 to $n_\mathrm{sim}$}
\State $x[i] \sim \mathrm{InvGauss}(\sqrt{b/a}, a)$
\EndFor
\Else
\State $y \gets (p+1/2)\sqrt{b/a}$ 
\For{$i$ in 1 to $n_\mathrm{sim}$}
\State $x[i] \sim \mathrm{InvGauss}(\sqrt{(b+2y)/a},  b+2y)$
\State $y \sim \mathrm{Gamma}(p+1/2, 1/x)$
\EndFor
\EndIf
\State \Return $x$
\end{algorithmic}
\end{algorithm}

\clearpage

\begin{algorithm}[h]
\caption{rgig.half: Sampling from $\mathrm{GIG}(p, a, b)$ for half-integer $p$}\label{alg:rgig.half}
\begin{algorithmic}
\Require $p \in \{ \, ... \, , -3/2, -1/2, 1/2, 3/2, \, ... \}, a > 0,  b > 0$
\If{$p < 0$}
    \State $x \gets 1/\mathrm{rgig.half}(-p, b, a)$
\ElsIf{$p = 1/2$}
    \State $y \sim \mathrm{InvGauss}(\sqrt{a/b}, a)$
    \State $x \gets 1/y$
\Else
    \State $w \gets K_{p-2}(\sqrt{ab})/K_p(\sqrt{ab})$
    \State $U \sim \mathrm{Unif}(0, 1)$
    \State $E \sim \mathrm{Exp}(a/2)$
    \If{$U < w$}
        \State $x \gets \mathrm{rgig.half}(p-2, a, b) + E$
    \Else 
        \State $x \gets \mathrm{rgig.half}(p-1, a, b) + E$
    \EndIf
\EndIf
\State \Return $x$
\end{algorithmic}
\end{algorithm}

\clearpage

 \begin{algorithm}[h!]
\caption{pgig.half: CDF of $\mathrm{GIG}(p, a, b)$ for half integer $p$}\label{alg:CDF}
\begin{algorithmic}
\Require $x > 0, p \in \{ \, ... \, , -3/2, -1/2, 1/2, 3/2, \, ... \}, a > 0,  b > 0$
\If{$p <0$}
    \State $1-\mathrm{pgig.half}(1/x, -p, b, a)$
\Else
    \State $v \gets \mathrm{vector}(\text{length} = p+1/2)$
    \State $v[1] \gets P[\mathrm{InvGauss}(\sqrt{a/b}, b) > 1/x]$
    \If{$\text{length}(v) > 1$}
        \State $w \gets K_{-1/2}(\sqrt{ab})/K_{3/2}(\sqrt{ab})$
        \State $I \gets  \frac{w (ab)^{-1/4} \Gamma\left(1/2, \frac{b}{2x}\right)}{2^{1/2} K_{-1/2}(\sqrt{ab})}  + \frac{(1-w) (ab)^{1/4} \Gamma\left(-1/2, \frac{b}{2x}\right)}{2^{3/2} K_{1/2}(\sqrt{ab})} $
        \State $k \gets \exp\{-ax/2\}$
        \State $v[2] \gets w \, \text{pgig.half}(x, -1/2, a, b) + (1-w) \, v[1] - k I$
        \If{$\text{length}(v) > 2$}
            \For{$i \text{ in } 3:\text{length}(v)$}
                \State $w \gets K_{i-5/2}(\sqrt{ab})/K_{i-1/2}(\sqrt{ab})$
                \State $I \gets \frac{w (ab)^{(i-5/2)/2} \Gamma\left(5/2-i, \frac{b}{2x}\right)}{2^{i-3/2} K_{i-5/2}(\sqrt{ab})}  + \frac{(1-w) (ab)^{(i-3/2)/2} \Gamma\left(3/2-i, \frac{b}{2x}\right)}{2^{i-1/2} K_{i-3/2}(\sqrt{ab})} $
                \State $v[i] \gets w \, v[i-2] + (1-w) \, v[i-1] - k I$
            \EndFor 
        \EndIf
    \EndIf
    \State \Return $v[\text{length}(v)]$
\EndIf
\end{algorithmic}
\end{algorithm}

\clearpage

\section{Reparametrization of the GIG}

This section discusses a reparametrization of the GIG and restates all the propositions in the main text in this alternative parametrization.

\subsection{Reparametrization}

Let $f_X(x)$ denote the PDF of $X \sim \text{GIG}(p, a, b)$, given by:
$$
f_X(x) = \frac{\left(a/b\right)^{p/2}}{2 K_p(\sqrt{ab})} x^{p-1} \exp\left(-\frac{ax + b/x}{2}\right), \quad x > 0,
$$
where $K_p(\cdot)$ is the modified Bessel function of the second kind. The parameters are $a > 0$, $b > 0$, and $p \in \mathbb{R}$.

Consider the reparametrization
$$
\omega = \sqrt{ab}, \quad \eta = \sqrt{b/a}, \quad p = p.
$$
Substituting these into the PDF, the density function becomes:
$$
f_X(x) = \frac{1}{2 \eta^p K_p(\omega)} x^{p-1} \exp\left(-\frac{\omega}{2} \left(\frac{x}{\eta} + \frac{\eta}{x}\right)\right), \quad x > 0.
$$

This reparametrization is useful because generating a random draw from $\text{GIG}(p, a, b)$ is equivalent to generating from $\text{GIG}(p, \omega, \eta = 1)$ and then rescaling the draws:
\begin{enumerate}
    \item Compute $\omega = \sqrt{ab}$ and $\eta = \sqrt{b/a}$.
    \item Generate a random draw $Z \sim \text{GIG}(p, \omega, \eta = 1)$.
    \item Scale $Z$ by $\eta$ to obtain $X = \eta Z$. 
\end{enumerate}

\textbf{Proof:}  
Let $Z \sim \text{GIG}(p, \omega, \eta = 1)$ and define $X = \eta Z$, where $\eta = \sqrt{b/a}$. Using the change of variables $z = x / \eta$, the Jacobian is $\mathrm{d}z / \mathrm{d}x = 1 / \eta$. The PDF of $X$ is:
$$
f_X(x) = f_Z\left(\frac{x}{\eta}\right) \frac{1}{\eta}.
$$
Substituting $f_Z(z)$ and simplifying:
$$
f_X(x) = \frac{1}{2 K_p(\omega)} \left(\frac{x}{\eta}\right)^{p-1} \exp\left(-\frac{\omega}{2} \left(\frac{x}{\eta} + \frac{\eta}{x}\right)\right) \frac{1}{\eta}.
$$
Combine terms:
$$
f_X(x) = \frac{\eta^{-p}}{2 K_p(\omega)} x^{p-1} \exp\left(-\frac{ax + b/x}{2}\right).
$$
Since $\eta^{-p} = \left(a/b\right)^{p/2}$, we recover:
$$
f_X(x) = \frac{(a/b)^{p/2}}{2 K_p(\omega)} x^{p-1} \exp\left(-\frac{ax + b/x}{2}\right).
$$
Thus, $X \sim \text{GIG}(p, a, b)$.

\subsection{Propositions in the new parametrization}

This section restates the propositions using $\omega$ and $\eta$. Propositions 4 and 5 involve the weight $w$, which determines how different GIGs are combined. Unfortunately, the notation $w$ clashes with the new parameter $\omega$, so we redefine
$$
\tau = w = \frac{K_{p-2}(\omega)}{K_p(\omega)}.
$$

\vspace{1pc}
\hrule
\vspace{1pc}

\textbf{Proposition 1:}  
The PDF of $X \sim \text{GIG}(p, \omega, \eta)$ can be expressed as a mixture of inverse Gaussian distributions:
$$
f_X(x) = \int_0^\infty f_{X|Y}(x \mid y) f_Y(y) \, \mathrm{d}y,
$$
where:
\begin{itemize}
    \item For $p < -1/2$:
    $$
    f_{X|Y}(x \mid y) = f_{\mathrm{IG}}(x  \, ; \,  \sqrt{\omega \eta/(\omega + 2 \eta y)},  \omega \eta ),
    $$
    $$
    f_Y(y) = \frac{1}{\mathcal{Z}_0} \, y^{-(p+3/2)} \exp\left\{-\, \omega \sqrt{1+2\eta y/\omega}\right\},
    $$
    with 
    $$
    \mathcal{Z}_0 = \frac{\sqrt{2}\,\Gamma\bigl(-p-\tfrac{1}{2}\bigr) \, K_p(\omega) \, \omega^{1/2} \, \eta^{\,p+1/2}}{\sqrt{\pi}}.
    $$
       \item For $p > -1/2$:
    $$
    f_{X|Y}(x \mid y) = f_{\mathrm{IG}}(x \, ; \, \sqrt{\eta^2+2\eta y/\omega}, \omega \eta + 2y),
    $$
    $$
    f_Y(y) = \frac{1}{\mathcal{Z}_{1}} \, \frac{y^{p-1/2}}{\sqrt{\omega \eta + 2y}} \,  \exp \left\{ -\omega\sqrt{1 + 2  y /(\omega \eta)} \right\} ,
    $$
    with 
    $$
    \mathcal{Z}_1 = \frac{\sqrt{2}\,\Gamma\bigl(p+1/2) \, K_p(\omega) \, \eta^{p}}{\sqrt{\pi}}.
    $$

\end{itemize}

\vspace{1pc}
\hrule
\vspace{1pc}
  
\noindent The Gibbs sampler iteratively alternates between:
\begin{enumerate}
    \item Sample $Y \mid X = x$:
    \begin{itemize}
        \item For $p > -1/2$, $Y \mid X = x \sim \text{Gamma}(p + 1/2, 1/x)$.
        \item For $p < -1/2$, $Y \mid X = x \sim \text{Gamma}(-(p + 1/2), x)$.
    \end{itemize}
    \item Sample $X \mid Y = y$:
    \begin{itemize}
        \item For $p < -1/2$, $X \mid Y = y \sim\mathrm{InvGauss}(\sqrt{\omega \eta/(\omega + 2 \eta y)},  \omega \eta )$.
        \item For $p > -1/2$, $X \mid Y = y \sim \mathrm{InvGauss}(\sqrt{\eta^2+2\eta y/\omega}, \omega \eta + 2y)$.

    \end{itemize}
\end{enumerate}

\vspace{1pc}
\hrule
\vspace{1pc}

\textbf{Proposition 4:}  
If $X \sim \text{GIG}(p, \omega, \eta)$ and $p > 1$, then:
$$
X =_d Y + E,
$$
where $Y$ and $E$ are independent random variables such that
$$
\color{black}
f_Y(y) = \tau \, f_{\text{GIG}}(y \, ; \, p-2, \omega, \eta) + (1-\tau) \, f_{\text{GIG}}(y \, ; \, p-1, \omega, \eta)
$$
and $E \sim \text{Exp}(\omega /(2\eta))$.
\vspace{1pc}
\hrule
\vspace{1pc}

\textbf{Proposition 5:}  
\noindent
Let $G_p(x)$ be the CDF of the $\mathrm{GIG}(p,\omega,\eta)$. For half-integer $p$ such that $|p| > 1/2$, the following recurrence holds:
$$
G_p(x) = \tau G_{p-2}(x) + (1-\tau) G_{p-1}(x) - \exp\left(-\frac{\omega x}{2\eta}\right) I_p(x),
$$
where
$$
I_p(x) = \frac{\tau \,\omega^{\,p-2}}{2^{\,p-1} K_{p-2}(\omega)} \Gamma\left(2-p, \frac{\omega \eta}{2x}\right)
\;+\; \frac{(1-\tau)\,\omega^{\,p-1}}{2^{\,p} K_{p-1}(\omega)} \Gamma\left(1-p, \frac{\omega \eta}{2x}\right),
$$
and $\Gamma(\cdot,\cdot)$ is the incomplete gamma function.
\vspace{1pc}
\hrule
\vspace{1pc}

\subsection{Gibbs sampler based on Zhang and Reiter (2022)}

This section describes an alternative Gibbs sampler for GIG that uses a decomposition presented in \cite{zhang2022generator}. The Gibbs sampler iterates between sampling from truncated distributions and the support of the conditionals changes from iteration to iteration. This characteristic can introduce challenges in verifying certain convergence properties, as discussed later.

\subsubsection{Notation for Truncated Distributions}
We denote truncated distributions using the following notation:
\begin{itemize}
    \item $X \sim \text{TInvGamma}(\alpha, \beta, L, U)$: a truncated inverse gamma distribution with shape parameter $\alpha$, scale parameter $\beta$, and truncation bounds $(L, U)$.
    \item $Y \sim \text{TExp}(\lambda, L, \infty)$: a truncated exponential distribution with rate $\lambda$ and lower truncation bound $L$.
\end{itemize}

Truncated densities are normalized over their truncation bounds.

\subsubsection{Joint Distribution}

Using results in \cite{zhang2012}, we can write a joint distribution of $X$ and $Y$ for $X \sim \text{GIG}(p, a, b)$:
$$
f_{X,Y}(x, y) \propto \exp\left(-\frac{b y}{2}\right) \, x^{p-1} \exp\left(-\frac{a}{2x}\right), \quad 0 < x < y,
$$
where $p \le 0$. 

From this joint distribution, the two full conditional distributions are derived by isolating terms involving $x$ and $y$, respectively.

\subsubsection{Full Conditional Distributions}

From $f_{X,Y}(x, y)$, the terms involving $x$ give:
$$
f_{X \mid Y}(x \mid y) \propto x^{p-1} \exp\left(-\frac{a}{2x}\right), \quad 0 < x < y.
$$
This corresponds to:
$$
X \mid Y = y \sim \text{TInvGamma}(-p, a/2, 0, y).
$$
Similarly, the terms involving $y$ yield:
$$
f_{Y \mid X}(y \mid x) \propto \exp\left(-\frac{b y}{2}\right), \quad y > x.
$$
This corresponds to:
$$
Y \mid X = x \sim \text{TExp}(b/2, x, \infty).
$$

\subsubsection{Generalization for all $p$}

For $p > 0$, we can use the property $1/X \sim \text{GIG}(-p, b, a)$ to extend the sampler:
\begin{enumerate}
    \item Transform $X$ to $Z = 1/X$ and apply the Gibbs sampler for $Z \sim \text{GIG}(-p, b, a)$.
    \item After sampling $Z$, invert to obtain $X = 1/Z$.
\end{enumerate}

\subsubsection{Assessing convergence}

Theorem 1 in \cite{johnson2015geometric} provides sufficient conditions for geometric ergodicity of two-stage Gibbs samplers. In this case, the support of the transition kernel changes at every iteration, which complicates the verification of Assumption $\mathcal{A}$ in \cite{johnson2015geometric}. It is known that changing the support of the transition kernel from iteration to iteration can negatively affect the properties of the sampler \citep{robert2004monte, meyn2012markov}.

\subsection{Numerical experiments}

We conducted two experiments to compare the efficiency of the data-augmented Gibbs sampler and the exact simulation method with the methods proposed by \citet{devroye2014random}, as implemented in the \texttt{R} package \texttt{boodist} \citep{boodist}, and \citet{hormann2014generating}, as implemented in the \texttt{R} package \texttt{GIGrvg} \citep{GIGrvg}. We compared with the methods from these articles because, as far as we are aware, they represent the two most recent published works on simulating from the generalized inverse Gaussian distribution. [In the more recent preprint of \citet{zhang2022generator}, the authors mention that their method never outperformed that of \citet{hormann2014generating}.] Before we describe the experiments, it is important to note that the efficiency of a method is highly dependent upon the details of its implementation, including the programming language, the skill of the programmer, etc. 

In the first experiment, we compared the two Gibbs samplers described in the example toward the end of Section 2. The original Gibbs sampler alternates between normal and GIG full conditionals, while the data-augmented Gibbs sampler cycles through normal, inverse Gaussian, and gamma full conditionals. We considered two versions of the original Gibbs sampler: one using the method of \citet{devroye2014random} implemented in \texttt{boodist} and one using the method of \citet{hormann2014generating} implemented in \texttt{GIGrvg}. The data were simulated from the normal distribution with parameters $\mu=1, \sigma^2=1$ and sample size $n=100.$ The prior hyperparameters were chosen as $\theta_0 = 0, \tau_0^2=100, p_0=3/4, a_0=1, b_0=1.$  We ran the Gibbs samplers for 5000 iterations. Based on 1000 repetitions, we found that the data-augmented Gibbs sampler was about 8\% slower than the version of the original Gibbs sampler using the method of \citet{hormann2014generating} implemented in \texttt{GIGrvg}. However, the data-augmented Gibbs sampler was more than 9 times faster than the version of the original Gibbs sampler using the method of \citet{devroye2014random} implemented in \texttt{boodist}. Again, these differences may tell us more about the quality of the implementations of each method than they tell us about the methods themselves. 

Neither the original Gibbs sampler nor the data-augmented Gibbs sampler produce independent draws from the posterior distribution, so it is important to evaluate the effective sample size (ESS) in addition to the time it takes to produce the Markov chains. For the original Gibbs sampler, we found that the ESS was close to the nominal sample size for both parameters $\mu$ and $\sigma^2.$ For the data-augmented Gibbs sampler, this was only true for $\mu.$ The ESS of $\sigma^2$ was about 1/3 of the nominal sample size. \textcolor{black}{To assess the ESS, we simulated 100 chains of length 50000 with 5000 warmup iterations and used the \texttt{effectiveSize} function from the \texttt{R} package \texttt{coda} \citep{coda}.}

The conclusions of the previous two paragraphs did not seem to be sensitive to the particular choices of parameters, sample size, or prior hyperparameters. \textcolor{black}{A summary of the results of the first simulation experiment appears in Table \ref{table:exp1}.}

\begin{table}[h!]
\caption{\color{black} A summary of the results of the first simulation experiment. The ``Time" column provides the cumulative time in seconds to produce 1000 chains of length 5000. The ``Relative" column is computed by dividing the ``Time" column by its smallest value. The last two columns report the median ESS per iteration of $\mu$ and $\sigma^2$ from 100 chains of length 50000 with 5000 warmup iterations. We do not report ESS per iteration for the method of \citet{devroye2014random} because it should not differ from that of the method of \citet{hormann2014generating}.}
\centering
\color{black}
\begin{tabular}{||c c c c c||} 
 \hline
 Sampler & Time & Relative & ESS/iter $\mu$ &  ESS/iter $\sigma^2$  \\ [0.5ex] 
 \hline\hline
 Proposed & 26.564 & 1.085 & 1.00 & .339 \\ 
 L \& H & 24.491 & 1.000 & 1.00 & .981 \\
 Devroye & 228.304 & 9.322 & * & * \\ [1ex] 
 \hline
\end{tabular}
\label{table:exp1}
\end{table}

\color{black}

In the second experiment, we compared the exact simulation method of Section 3 with the methods of \citet{devroye2014random} and \citet{hormann2014generating} in the case where $p=3/2.$ The purpose of the experiment was to establish that the exact simulation method outperforms the other methods when $|p|$ is sufficiently small. We set $p=3/2, a=1, b=1.$ Based on $10000$ repetitions, we found that the method of \citet{hormann2014generating} implemented in \texttt{GIGrvg} took almost 10\% longer than the exact simulation method of Section 3. The method of \citet{devroye2014random} implemented in \texttt{boodist} took about 87\% longer than the exact simulation method. These conclusions did not seem to be sensitive to the particular choices of the parameters $a$ and $b.$ \textcolor{black}{A summary of the results of the second simulation experiment appears in Table \ref{table:exp2}.}

\begin{table}[h!]
\caption{\textcolor{black}{A summary of the results of the second simulation experiment. The ``Time" column provides the cumulative time in seconds to complete all 10000 repetitions. The ``Relative" column is computed by dividing the ``Time" column by its smallest value.}}
\centering
\color{black}
\begin{tabular}{||c c c||} 
 \hline
 Sampler & Time & Relative \\ [0.5ex] 
 \hline\hline
 Proposed & 8.497 & 1.000  \\ 
 L \& H & 9.317 & 1.097  \\
 Devroye & 15.858 & 1.866  \\ [1ex] 
 \hline
\end{tabular}
\label{table:exp2}
\end{table}

In our view, the data-augmented Gibbs sampler and the exact simulation method complement the existing methods for simulating from the GIG. In the first experiment, we saw that our implementation of the data-augmented Gibbs sampler can be more efficient than the \texttt{R} package \texttt{boodist} for simulating from the GIG. In the second experiment, we saw that our implementation of the exact simulation method can be more efficient than both \texttt{boodist} and \texttt{GIGrvg} when $|p|$ is sufficiently small. Code to replicate these experiments is available at \url{https://github.com/michaeljauch/gig}.

\clearpage

\color{black}

\bibliographystyle{chicago} 
\bibliography{GIGs}

\end{document}